\newcommand{\comment}[1]{}
\newcommand{\blankbrac}{[ \cdot,\cdot ]}
\newcommand{\brac}[2]{\left[ #1,#2 \right]}
\newcommand{\Hblank}{\lbrace \cdot,\cdot \rbrace_{\mathrm{h}}}
\newcommand{\Sblank}{\lbrace \cdot,\cdot \rbrace_{\mathrm{s}}}
\newcommand{\Hbrac}[2]{\left \lbrace #1,#2 \right\rbrace_{\mathrm{h}}}
\newcommand{\Sbrac}[2]{\left \lbrace #1,#2 \right\rbrace_{\mathrm{s}}}
\newcommand{\h}{\mathrm{h}}
\newcommand{\s}{\mathrm{s}}
\newcommand{\ham}{\mathrm{Ham}\left(X\right)}
\newcommand{\hamG}{\mathrm{Ham}\left(X\right)^G}
\newcommand{\Gham}{\mathrm{Ham}\left(G\right)}
\newcommand{\hamL}{\mathrm{Ham}\left(G\right)^{L}}
\newcommand{\cinf}{C^{\infty}\left(X\right)}
\newcommand{\cinfG}{C^{\infty}\left(X\right)^G}
\newcommand{\cinfL}{C^{\infty}\left(G\right)^{L}}
\newcommand{\lie}[2]{\mathcal{L}\,_{v_{#1}}\, #2}
\newcommand{\Lie}{\mathcal{L}}
\newcommand{\R}{\mathbb{R}}
\newcommand{\ip}[1]{\iota_{v_{#1}}}
\newcommand{\maps}{\colon}    
\newcommand{\pback}[1]{{\mu_{#1}}^{\ast}}
\newcommand{\pfor}[1]{{\mu_{#1}}_{\ast}}
\newcommand{\g}{\mathfrak{g}}
\newcommand{\innerprod}[2]{\langle #1,#2 \rangle}
\newcommand{\Vect}{\mathrm{Vect}_{H}}
\newcommand{\tensor}{\otimes}
\DeclareMathOperator{\Aut}{\mathrm{Aut}}
\DeclareMathOperator{\End}{\mathrm{End}}
\DeclareMathOperator{\ad}{\mathrm{ad}}
\DeclareMathOperator{\Ad}{\mathrm{Ad}}
\theoremstyle{plain}
\newtheorem{theorem}{Theorem}[section]
\newtheorem{prop}[theorem]{Proposition}
\newtheorem{corollary}[theorem]{Corollary}
\newtheorem{definition}[theorem]{Definition}
\theoremstyle{remark}
\title{Categorified Symplectic Geometry and the String Lie 2-Algebra}
\author{John C.\ Baez and Christopher L.\ Rogers}
\email{\texttt{baez@math.ucr.edu},\texttt{chris@math.ucr.edu}}
\address{Department of Mathematics, University of California, 
Riverside, California 92521, USA}
\date{\today}
\begin{document}

\begin{abstract}
Multisymplectic geometry is a generalization of symplectic geometry
suitable for $n$-dimensional field theories, in which the
nondegenerate 2-form of symplectic geometry is replaced by a
nondegenerate $(n+1)$-form.  The case $n = 2$ is relevant to string
theory: we call this `2-plectic geometry'.  Just as the Poisson
bracket makes the smooth functions on a symplectic manifold into a Lie
algebra, the observables associated to a 2-plectic manifold form a
`Lie 2-algebra', which is a categorified version of a Lie algebra.
Any compact simple Lie group $G$ has a canonical 2-plectic structure,
so it is natural to wonder what Lie 2-algebra this example yields.
This Lie 2-algebra is infinite-dimensional, but we show here that the
sub-Lie-2-algebra of left-invariant observables is finite-dimensional,
and isomorphic to the already known `string Lie 2-algebra' associated
to $G$.  So, categorified symplectic geometry gives a geometric
construction of the string Lie 2-algebra.
\end{abstract}

\maketitle

\section{Introduction}
\label{introduction}

Symplectic geometry is part of a more general subject called
multisymplectic geometry, invented by DeDonder \cite{DeDonder} and
Weyl \cite{Weyl} in the 1930s.  In particular, just as the phase space
of a classical point particle is a symplectic manifold, a classical
string may be described using a finite-dimensional `2-plectic'
manifold.  Here the nondegenerate closed 2-form familiar from
symplectic geometry is replaced by a nondegenerate closed $3$-form.

Just as the smooth functions on a symplectic manifold form a Lie
algebra under the Poisson bracket operation, any 2-plectic manifold
gives rise to a `Lie 2-algebra'.  This is a {\it categorified} version
of a Lie algebra: that is, a category equipped with a bracket
operation obeying the usual Lie algebra laws {\it up to isomorphism}.
Alternatively, we may think of a Lie 2-algebra as a 2-term chain
complex equipped with a bracket satisfying the Lie algebra laws up to
chain homotopy.  

Now, every compact simple Lie group $G$ has a canonical 2-plectic
structure, built from the Killing form and the Lie bracket.  
Which Lie 2-algebra does this example yield?  Danny
Stevenson suggested that the answer should be related to the already 
known `string Lie 2-algebra' of $G$.  Our main result here confirms his
intuition.  The Lie 2-algebra associated to the 2-plectic manifold $G$
comes equipped with an action of $G$ via left translations.  The
translation-invariant elements form a Lie 2-algebra in their own
right, and this is the string Lie 2-algebra.  

This gives a new geometric construction of the string Lie 2-algebra.
For another construction, based on gerbes --- or alternatively,
central extensions of loop groups --- see the paper by Baez, Crans,
Schreiber and Stevenson \cite{BCSS}.  It will be interesting to see
what can be learned from comparing these approaches.

The plan of the paper is as follows.  In Section \ref{2-plectic} we
begin with a review of our recent paper \cite{BHR} on 2-plectic
geometry, Lie 2-algebras, and the classical bosonic string.  The goal
is to describe the Lie 2-algebra associated to a 2-plectic manifold.
We only state theorems needed for the work at hand, referring the
reader to the previous work for proofs and background material.  More
information on multisymplectic geometry can be found in papers by
Cantrijn, Ibort, and De Leon \cite{Cantrijn:1999} and by Cari\~{n}ena,
Crampin, and Ibort \cite{Carinena-Crampin-Ibort}.  Further details
regarding the application of multisymplectic geometry to classical
field theory can be found in the work of Kijowski \cite{Kijowski},
Gotay, Isenberg, Marsden, and Montgomery \cite{GIMM}, Helein
\cite{Helein}, and Rovelli \cite{Rovelli}.  For Lie 2-algebras, see
Baez and Crans \cite{HDA6} and also Roytenberg \cite{Roytenberg},
whose approach we will follow in this paper.

In Section \ref{group_actions} we consider a Lie group acting on a
2-plectic manifold, preserving the 2-plectic structure.  This group
then acts on the associated Lie 2-algebra, and the invariant elements
form a Lie sub-2-algebra.  In Section \ref{2-plectic_CSLG} we apply
this idea to the canonical 2-plectic structure on a compact Lie group
$G$, where $G$ acts on itself by left translations.  Finally, in
Section \ref{string_Lie} we show that the resulting Lie 2-algebra is
none other than the string Lie 2-algebra.

\section{2-Plectic Geometry and Lie 2-Algebras}
\label{2-plectic}

We begin by defining 2-plectic manifolds and Lie 2-algebras.
Then we explain how a 2-plectic manifold gives a Lie 2-algebra.

\begin{definition}
\label{2-plectic_def}
A 3-form $\omega$ on a $C^\infty$ manifold $X$ is 
{\bf 2-plectic}, or more specifically
a {\bf 2-plectic structure}, if it is both closed:
\[
    d\omega=0,
\]
and nondegenerate:
\[
    \forall v \in T_{x}X,\ \iota_{v} \omega =0 \Rightarrow v =0
\]
where we use $\iota_v \omega$ to stand for the interior product
$\omega(v, \cdot, \cdot)$.  
If $\omega$ is a 2-plectic form on $X$ we call the pair $(X,\omega)$ 
a \bf{2-plectic manifold}.
\end{definition}

Note that the 2-plectic structure induces an injective map from the
space of vector fields on $X$ to the space of 2-forms on $X$. This leads
us to the following definition:

\begin{definition}
Let $(X,\omega)$ be a 2-plectic manifold.  A 1-form $\alpha$ on $X$ 
is {\bf Hamiltonian} if there exists a vector field $v_\alpha$ on $X$ such that
\[
d\alpha= -\ip{\alpha} \omega.
\]
We say $v_\alpha$ is the {\bf Hamiltonian vector field} corresponding to $\alpha$. 
The set of Hamiltonian 1-forms and the set of Hamiltonian vector
fields on a 2-plectic manifold are both vector spaces and are denoted
as $\ham$ and $\Vect\left(X \right)$, respectively.
\end{definition}

The Hamiltonian vector field $v_\alpha$ is unique if it exists, but
note there may be 1-forms $\alpha$ having no Hamiltonian vector field.
Furthermore, two distinct Hamiltonian 1-forms may differ by a closed
1-form and therefore share the same Hamiltonian vector field.

We can generalize the Poisson bracket of functions in symplectic
geometry by defining a bracket of Hamiltonian 1-forms. This can be
done in two ways:

\begin{definition}  
\label{hemi-bracket.defn}
Given $\alpha,\beta \in \ham$, the {\bf hemi-bracket} 
$\Hbrac{\alpha}{\beta}$ is the 1-form given by
\[  \Hbrac{\alpha}{\beta} = \lie{\alpha}{\beta},\]
where $\Lie_{v_{\alpha}}$ is the Lie derivative along the vector field $v_{\alpha}$.
\end{definition}

\begin{definition}
\label{semi-bracket.defn}
Given $\alpha,\beta\in \ham$, the {\bf semi-bracket} $\Sbrac{\alpha}{\beta}$
is the 
\break
1-form given by 
\[  \Sbrac{\alpha}{\beta} = \ip{\beta}\ip{\alpha}\omega .\]
\end{definition}

These brackets in general will differ by an exact 1-form:

\begin{prop} Given $\alpha,\beta\in \ham$, 
\[
\Hbrac{\alpha}{\beta}=\Sbrac{\alpha}{\beta} + d\ip{\alpha}\beta. 
\label{bracket_relation}
\]
\end{prop}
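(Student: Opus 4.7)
The plan is to reduce everything to Cartan's magic formula applied to the hemi-bracket, and then use the fact that $\beta$ is Hamiltonian to rewrite $d\beta$ in terms of $\omega$.

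First, I would expand $\Hbrac{\alpha}{\beta} = \Lie_{v_\alpha}\beta$ using Cartan's formula:
\[
\Lie_{v_\alpha}\beta = d\ip{\alpha}\beta + \ip{\alpha} d\beta.
\]
This already produces the exact term $d\ip{\alpha}\beta$ that appears on the right-hand side of the proposition, so the content of the proof is to identify $\ip{\alpha} d\beta$ with $\Sbrac{\alpha}{\beta} = \ip{\beta}\ip{\alpha}\omega$.

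Second, since $\beta \in \ham$ by hypothesis, the defining equation of Hamiltonian 1-forms gives $d\beta = -\ip{\beta}\omega$. Substituting this yields $\ip{\alpha} d\beta = -\ip{\alpha}\ip{\beta}\omega$. The remaining step is a purely algebraic one: for the 3-form $\omega$, contracting with two vector fields in opposite orders differs by a sign, because $\omega(v_\beta, v_\alpha, \cdot) = -\omega(v_\alpha, v_\beta, \cdot)$; that is, $\ip{\alpha}\ip{\beta}\omega = -\ip{\beta}\ip{\alpha}\omega$. Hence $\ip{\alpha} d\beta = \ip{\beta}\ip{\alpha}\omega = \Sbrac{\alpha}{\beta}$, and combining everything gives the claimed identity.

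I do not foresee any real obstacle: the argument is a three-line manipulation, and the only subtle point is keeping track of signs when interchanging the two interior products on the 3-form $\omega$. Since the proposition is essentially a bookkeeping lemma relating the two candidate brackets on $\ham$, the work is entirely in unpacking definitions and applying Cartan's formula.
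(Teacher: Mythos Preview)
Your argument is correct: Cartan's formula plus the Hamiltonian condition $d\beta=-\iota_{v_\beta}\omega$ and the antisymmetry $\iota_{v_\alpha}\iota_{v_\beta}\omega=-\iota_{v_\beta}\iota_{v_\alpha}\omega$ give the identity immediately. The paper does not supply an in-text proof here but simply cites Proposition~3.5 of the companion paper \cite{BHR}; your three-line computation is the standard one and is presumably what appears there.
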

\begin{proof}
See Proposition 3.5 in \cite{BHR}.
\end{proof}

The space $\ham$ is closed under both brackets, but neither bracket
satisfies all the axioms of a Lie algebra.  The hemi-bracket fails to
be skew-symmetric, while the semi-bracket fails to satisfy the Jacobi
identity.

\begin{prop} \label{hemi-bracket} Let $\alpha,\beta,\gamma \in \ham$ and
let $v_\alpha,v_\beta,v_\gamma$ be the respective Hamiltonian
vector fields.  The hemi-bracket $\Hblank$ has the following 
properties:

  \begin{enumerate}
\item The bracket of Hamiltonian forms is Hamiltonian:
  \begin{eqnarray}
    d\Hbrac{\alpha}{\beta} = -\iota_{[v_\alpha,v_\beta]} \omega
\label{hemi-closure}
  \end{eqnarray}
so in particular we have 
\[     v_{\Hbrac{\alpha}{\beta}} = [v_\alpha,v_\beta]  .\]

\item The bracket is skew-symmetric up to an exact 1-form:
  \begin{eqnarray}
    \Hbrac{\alpha}{\beta} + dS_{\alpha,\beta} = -\Hbrac{\beta}{\alpha} 
  \end{eqnarray}
  with $S_{\alpha,\beta}=-(\ip{\alpha}\beta + \ip{\beta}\alpha)$.

\item The bracket satisfies the Jacobi identity:
  \begin{eqnarray}
    \Hbrac{\alpha}{\Hbrac{\beta}{\gamma}} = 
\Hbrac{\Hbrac{\alpha}{\beta}}{\gamma} + \Hbrac{\beta}{\Hbrac{\alpha}{\gamma}}. 
  \end{eqnarray}
  
  \end{enumerate}
\end{prop}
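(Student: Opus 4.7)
The plan is to derive all three properties directly from the definition $\Hbrac{\alpha}{\beta}=\Lie_{v_\alpha}\beta$ using Cartan's magic formula $\Lie_v = d\iota_v+\iota_v d$ together with the standard commutators $[\Lie_u,\iota_w]=\iota_{[u,w]}$ and $[\Lie_u,\Lie_v]=\Lie_{[u,v]}$. Throughout, the two facts specific to the 2-plectic setting that I will invoke are $d\omega=0$ and $d\beta=-\iota_{v_\beta}\omega$ (and similarly for $\alpha,\gamma$).

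For Part 1, I would compute $d\Hbrac{\alpha}{\beta}=d\Lie_{v_\alpha}\beta=\Lie_{v_\alpha}d\beta$ using that $d$ commutes with the Lie derivative, then substitute $d\beta=-\iota_{v_\beta}\omega$. Applying $\Lie_{v_\alpha}\iota_{v_\beta}=\iota_{v_\beta}\Lie_{v_\alpha}+\iota_{[v_\alpha,v_\beta]}$ reduces the question to evaluating $\Lie_{v_\alpha}\omega$, which vanishes by Cartan's formula since $\iota_{v_\alpha}\omega=-d\alpha$ is exact and $\omega$ is closed. This yields $d\Hbrac{\alpha}{\beta}=-\iota_{[v_\alpha,v_\beta]}\omega$, which simultaneously proves $\ham$ is closed under $\Hblank$ and identifies the Hamiltonian vector field as $[v_\alpha,v_\beta]$ by uniqueness.

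For Part 2, I would expand $\Hbrac{\alpha}{\beta}+\Hbrac{\beta}{\alpha}=\Lie_{v_\alpha}\beta+\Lie_{v_\beta}\alpha$ via Cartan's formula into
\[
d(\iota_{v_\alpha}\beta+\iota_{v_\beta}\alpha)+\iota_{v_\alpha}d\beta+\iota_{v_\beta}d\alpha.
\]
Substituting $d\alpha=-\iota_{v_\alpha}\omega$ and $d\beta=-\iota_{v_\beta}\omega$, the last two terms become $-\iota_{v_\alpha}\iota_{v_\beta}\omega-\iota_{v_\beta}\iota_{v_\alpha}\omega$, which vanishes by antisymmetry of the interior product on forms. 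This leaves exactly $-dS_{\alpha,\beta}$, as desired.

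For Part 3, the Jacobi identity follows immediately from Part 1: since $v_{\Hbrac{\alpha}{\beta}}=[v_\alpha,v_\beta]$, the middle term rewrites as $\Lie_{[v_\alpha,v_\beta]}\gamma$, and the required identity reduces to the standard derivation property $\Lie_{v_\alpha}\Lie_{v_\beta}=\Lie_{[v_\alpha,v_\beta]}+\Lie_{v_\beta}\Lie_{v_\alpha}$ applied to $\gamma$. The only real obstacle in the entire proposition is Part 1: one must be careful with the signs when moving interior products past Lie derivatives, and must notice that closedness of $\omega$ combined with the Hamiltonian condition forces $\Lie_{v_\alpha}\omega=0$. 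Once that is in hand, Parts 2 and 3 are essentially formal consequences.
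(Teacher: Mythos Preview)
Your argument is correct in all three parts: the use of $[\Lie_u,\iota_w]=\iota_{[u,w]}$ together with $\Lie_{v_\alpha}\omega=0$ for Part~1, the Cartan expansion and cancellation $\iota_{v_\alpha}\iota_{v_\beta}\omega+\iota_{v_\beta}\iota_{v_\alpha}\omega=0$ for Part~2, and the reduction to $[\Lie_{v_\alpha},\Lie_{v_\beta}]=\Lie_{[v_\alpha,v_\beta]}$ for Part~3 are all valid and cleanly executed. The paper itself does not give a proof here but simply cites Proposition~3.6 of \cite{BHR}; your computation is the standard one and is almost certainly what appears there.
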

\begin{proof}
See Proposition 3.6 in \cite{BHR}.
\end{proof}

\begin{prop}\label{semi-bracket} Let $\alpha,\beta,\gamma \in \ham$ and let
$v_\alpha,v_\beta,v_\gamma$ be the respective Hamiltonian
vector fields.  The semi-bracket $\Sblank$ has the following properties:
  \begin{enumerate}
\item The bracket of Hamiltonian forms is Hamiltonian:
  \begin{eqnarray}
    d\Sbrac{\alpha}{\beta} = -\iota_{[v_\alpha,v_\beta]} \omega.
  \end{eqnarray}
so in particular we have 
\[     v_{\Sbrac{\alpha}{\beta}} = [v_\alpha,v_\beta]  .\]
\item The bracket is antisymmetric: 
  \begin{eqnarray}
    \Sbrac{\alpha}{\beta} = -\Sbrac{\beta}{\alpha}
  \end{eqnarray}

\item The bracket satisfies the Jacobi identity up to an exact 1-form:
\begin{eqnarray}
    \Sbrac{\alpha}{\Sbrac{\beta}{\gamma}} + dJ_{\alpha,\beta,\gamma} =
    \Sbrac{\Sbrac{\alpha}{\beta}}{\gamma} +
    \Sbrac{\beta}{\Sbrac{\alpha}{\gamma}} 
  \end{eqnarray}
with $J_{\alpha,\beta,\gamma}=-\ip{\alpha}\ip{\beta}\ip{\gamma}\omega$.
\end{enumerate}
\end{prop}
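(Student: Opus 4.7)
The plan is to verify each of the three claims using Cartan calculus. The central input is the identity $\mathcal{L}_{v_\alpha}\omega = 0$, which follows from Cartan's magic formula together with the closedness of $\omega$ and the defining relation $d\alpha = -\ip{\alpha}\omega$: indeed, $\mathcal{L}_{v_\alpha}\omega = d\ip{\alpha}\omega + \ip{\alpha}d\omega = -d(d\alpha) + 0 = 0$. I will also use the commutator identity $[\mathcal{L}_v,\iota_w] = \iota_{[v,w]}$ and the anticommutativity $\iota_v\iota_w = -\iota_w\iota_v$ on forms throughout.

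For part (1), I apply Cartan's formula $d\iota_v = \mathcal{L}_v - \iota_v d$ to $\Sbrac{\alpha}{\beta} = \ip{\beta}\ip{\alpha}\omega$, peeling off the outer interior product. The term $\ip{\beta}\, d\ip{\alpha}\omega$ vanishes because $d\ip{\alpha}\omega = \mathcal{L}_{v_\alpha}\omega = 0$. The remaining term $\mathcal{L}_{v_\beta}\ip{\alpha}\omega$ rewrites, via the commutator identity, as $\iota_{[v_\beta,v_\alpha]}\omega + \ip{\alpha}\mathcal{L}_{v_\beta}\omega$, whose second summand vanishes for the same reason. This yields $d\Sbrac{\alpha}{\beta} = -\iota_{[v_\alpha,v_\beta]}\omega$, which shows $\Sbrac{\alpha}{\beta}$ is Hamiltonian and, by nondegeneracy of $\omega$, identifies its Hamiltonian vector field as $[v_\alpha, v_\beta]$. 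Part (2) is immediate: interior products anticommute on forms, so $\ip{\beta}\ip{\alpha}\omega = -\ip{\alpha}\ip{\beta}\omega$.

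Part (3) is the main obstacle. I plan to compute both sides of the Jacobi-up-to-homotopy relation directly and match them. Using part (1), each term of the Jacobiator $\Sbrac{\alpha}{\Sbrac{\beta}{\gamma}} - \Sbrac{\Sbrac{\alpha}{\beta}}{\gamma} - \Sbrac{\beta}{\Sbrac{\alpha}{\gamma}}$ rewrites as a double interior product with one slot carrying a Lie bracket of Hamiltonian vector fields, for instance $\Sbrac{\alpha}{\Sbrac{\beta}{\gamma}} = \iota_{[v_\beta,v_\gamma]}\ip{\alpha}\omega$. On the other side, $-dJ_{\alpha,\beta,\gamma} = d\bigl(\ip{\alpha}\ip{\beta}\ip{\gamma}\omega\bigr)$ is computed by three nested applications of Cartan's magic formula; the closedness of $\omega$ together with the vanishings $\mathcal{L}_{v_\bullet}\omega = 0$ annihilate every intermediate term in which $d$ or $\mathcal{L}$ hits $\omega$, leaving a sum of the same type of double interior products with an inserted bracket. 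Anticommutativity of interior products then makes the two sides match term by term. The delicate part is the sign bookkeeping across the nested applications of Cartan's formula, but the key observation is that every potentially obstructing contribution is precisely the kind that is killed by $\mathcal{L}_{v_\bullet}\omega = 0$, so no residual terms survive.
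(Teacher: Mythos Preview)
Your argument is correct: each step of the Cartan-calculus computation checks out, including the nested application in part~(3), where the three surviving terms $\iota_{[v_\alpha,v_\beta]}\iota_{v_\gamma}\omega + \iota_{v_\beta}\iota_{[v_\alpha,v_\gamma]}\omega + \iota_{[v_\beta,v_\gamma]}\iota_{v_\alpha}\omega$ match the Jacobiator defect exactly after the anticommutation you describe. The paper itself does not prove this proposition but simply cites Proposition~3.7 of \cite{BHR}; your direct computation is precisely the standard proof one would expect to find there, so in substance your approach coincides with the intended one.
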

\begin{proof}
See Proposition 3.7 in \cite{BHR}.
\end{proof}

The observation that these brackets satisfy the Lie algebra laws `up
to exact 1-forms' leads to the notion of a Lie 2-algebra. Here we
define a Lie 2-algebra to be a 2-term chain complex of vector spaces
equipped with structures analogous those of a Lie algebra, for which
the usual laws hold up to coherent chain homotopy. Alternative definitions
equivalent to the one given here are presented in \cite{HDA6} and
\cite{Roytenberg}.
\begin{definition}
A {\bf Lie 2-algebra} is a 2-term chain complex of vector spaces
$L = (L_0\stackrel{d}\leftarrow L_1)$ equipped with the following structure:
\begin{itemize}
\item a chain map $\blankbrac\maps L \tensor L\to L$ called the {\bf
bracket};
\item a chain homotopy $S \maps L\tensor L \to L$
from the chain map
\[     \begin{array}{ccl}  
     L \tensor L &\to& L   \\
     x \tensor y &\longmapsto& [x,y]  
  \end{array}
\]
to the chain map
\[     \begin{array}{ccl}  
     L \tensor L & \to & L   \\
     x \tensor y & \longmapsto & -[y,x]  
  \end{array}
\]
called the {\bf alternator};
\item an antisymmetric chain homotopy $J \maps L \tensor L \tensor L
  \to L$ 
from the chain map
\[     \begin{array}{ccl}  
     L \tensor L \tensor L & \to & L   \\
     x \tensor y \tensor z & \longmapsto & [x,[y,z]]  
  \end{array}
\]
to the chain map
\[     \begin{array}{ccl}  
     L \tensor L \tensor L& \to & L   \\
     x \tensor y \tensor z & \longmapsto & [[x,y],z] + [y,[x,z]]  
  \end{array}
\]
called the {\bf Jacobiator}.
\end{itemize}
In addition, the following equations are required to hold:
\begin{equation}
\begin{array}{c}
  [x,J(y,z,w)] + J(x,[y,z],w) +
  J(x,z,[y,w]) + [J(x,y,z),w] \\ + [z,J(x,y,w)] 
  = J(x,y,[z,w]) + J([x,y],z,w) \\ + [y,J(x,z,w)] + J(y,[x,z],w) + J(y,z,[x,w]),
\end{array}
\end{equation}
\begin{equation}
    J(x,y,z)+J(y,x,z)=-[S(x,y),z],
\end{equation}
\begin{equation}
    J(x,y,z)+J(x,z,y)=[x,S(y,z)]-S([x,y],z)-S(y,[x,z]),
\end{equation}
\begin{equation}
    {S(x,[y,z])} = S([y,z],x).
\end{equation}
\end{definition}

\begin{definition}
A Lie 2-algebra for which the Jacobiator is the identity
chain homotopy is called {\bf hemistrict}.  One for which
the alternator is the identity chain homotopy is called {\bf semistrict}.
\end{definition}
\noindent

Given a 2-plectic manifold $(X,\omega)$, we can 
construct both a hemistrict and a semistrict Lie 2-algebra.
Both of these Lie 2-algebras have the same underlying 2-term complex, 
namely:
\[
L \quad = \quad 
\ham  
\stackrel{d}{\leftarrow} \cinf  
\stackrel{0}{\leftarrow} 0 
\stackrel{0}{\leftarrow} 0 
\stackrel{0}{\leftarrow} \cdots 
 \]
where $d$ is the usual exterior derivative of functions.  
This chain complex is well-defined, since 
any exact form is Hamiltonian, with $0$ as its Hamiltonian vector
field.

The hemistrict Lie 2-algebra is equipped with a chain map called the
{\bf hemi-bracket}:
\[      \Hbrac{\cdot}{\cdot} \maps L \otimes L \to L  .\]
In degree $0$, the hemi-bracket is given as in 
Definition \ref{hemi-bracket.defn}: 
\[  \Hbrac{\alpha}{\beta} = \lie{\alpha}\beta.   \]
In degree $1$, it is given by: 
\[  \Hbrac{\alpha}{f} = \lie{\alpha}{f}, \qquad \Hbrac{f}{\alpha} = 0.  \]
In degree $2$, we necessarily have
\[   \Hbrac{f}{g} = 0.    \]
Here $\alpha,\beta \in \ham$, while $f,g \in \cinf$.


Similarly, the semistrict Lie 2-algebra comes with a 
chain map called the {\bf semi-bracket}:
\[      \Sbrac{\cdot}{\cdot} \maps L \otimes L \to L  .\]
In degree $0$, the semi-bracket is given as in 
Definition \ref{semi-bracket.defn}: 
\[  \Sbrac{\alpha}{\beta} = \ip{\beta}\ip{\alpha}\omega .\]
In degrees $1$ and $2$, we set it equal to zero:
\[  \Hbrac{\alpha}{f} = 0, \qquad \Hbrac{f}{\alpha} = 0, \qquad
    \Hbrac{f}{g} = 0.    \]

The precise constructions of these Lie 2-algebras are 
given as follows:

\begin{theorem}
\label{hemistrict}
If $(X,\omega)$ is a 2-plectic manifold, there is 
a hemistrict Lie 2-algebra $L(X,\omega)_\h$
where:
\begin{itemize}
\item the space of 0-chains is $\ham$,
\item the space of 1-chains is $\cinf$,
\item the differential is the exterior derivative $d \maps \cinf \to \ham$,
\item the bracket is $\Hblank$,
\item the alternator is the bilinear map $S \maps \ham \times \ham \to \cinf$ 
defined by $S_{\alpha,\beta}= -(\ip{\alpha}\beta + \ip{\beta}\alpha)$, and
\item the Jacobiator is the identity chain homotopy, hence given by the 
trilinear map $J \maps \ham \times \ham \times \ham \to \cinf$ with 
$J_{\alpha,\beta,\gamma} = 0$.
\end{itemize}
\end{theorem}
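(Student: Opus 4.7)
The plan is to verify the conditions of a Lie 2-algebra in order: well-definedness of the underlying complex, the chain-map property of $\Hblank$, the chain-homotopy property of the alternator $S$, the admissibility of $J = 0$, and the four coherence equations. Proposition \ref{hemi-bracket} does most of the work; what remains is a few degree-counting checks and one identity from Cartan calculus.

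First, any exact 1-form $df$ is Hamiltonian with $v_{df} = 0$ (by nondegeneracy of $\omega$), so $d\maps \cinf \to \ham$ is defined. Proposition \ref{hemi-bracket}(1) gives $\Hbrac{\alpha}{\beta} \in \ham$ and identifies $v_{\Hbrac{\alpha}{\beta}} = [v_\alpha, v_\beta]$. For the chain map condition, the nontrivial identities are $d\Hbrac{\alpha}{f} = \Hbrac{\alpha}{df}$, which is the commutativity $d\Lie_{v_\alpha} = \Lie_{v_\alpha} d$ applied to the function $f$, and $d\Hbrac{f}{\alpha} = \Hbrac{df}{\alpha}$, both sides of which vanish: the left by definition and the right because $v_{df} = 0$.

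For the alternator, the only potentially nonzero component is the degree-$0$ piece $S_{\alpha,\beta} = -(\ip{\alpha}\beta + \ip{\beta}\alpha)$; higher-degree components necessarily vanish since $L_{2}=0$. The chain-homotopy identity in degree $0$ reads $dS_{\alpha,\beta} = -\Hbrac{\alpha}{\beta} - \Hbrac{\beta}{\alpha}$, which is Proposition \ref{hemi-bracket}(2); in degree $1$, with $S$ zero, the relation collapses, after a short expansion, to the vanishing statement $v_{df} = 0$. The admissibility of $J = 0$ as a Jacobiator amounts to the strict Jacobi identity for $\Hblank$, which is exactly Proposition \ref{hemi-bracket}(3).

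The first coherence equation holds trivially when $J = 0$. The second reduces to $\Hbrac{S_{\alpha,\beta}}{\gamma} = 0$, which is immediate because $S_{\alpha,\beta} \in \cinf$ and $\Hbrac{f}{\alpha} = 0$ by definition; similarly for the degree-$1$ case where $\gamma$ is a function. The fourth is the manifest symmetry $S_{\alpha,\beta} = S_{\beta,\alpha}$ visible from the formula. The one substantive check, and the main obstacle, is the third coherence equation, which in its sole nontrivial degree reduces to
\[
\Lie_{v_\alpha} S_{\beta,\gamma} = S_{\Hbrac{\alpha}{\beta},\,\gamma} + S_{\beta,\,\Hbrac{\alpha}{\gamma}}.
\]
Expanding both sides using the definition of $S$ together with $v_{\Hbrac{\alpha}{\beta}} = [v_\alpha, v_\beta]$, this reduces to four applications of the Cartan identity $\Lie_v \iota_w - \iota_w \Lie_v = \iota_{[v,w]}$, applied to $\beta$ and $\gamma$ with the vector fields $v_\alpha, v_\beta, v_\gamma$. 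After cancellation the two sides match.
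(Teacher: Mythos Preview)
Your verification is correct. The paper itself does not prove this theorem here but simply cites Theorem~4.3 of \cite{BHR}; your axiom-by-axiom check is therefore more detailed than what the present paper offers, and it follows the natural (and presumably the same) route: invoke Proposition~\ref{hemi-bracket} for the degree-$0$ statements and finish the remaining degrees and coherence laws with Cartan calculus.

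One small omission worth patching: when you say ``the admissibility of $J=0$ as a Jacobiator amounts to the strict Jacobi identity for $\Hblank$, which is exactly Proposition~\ref{hemi-bracket}(3),'' that proposition only handles the degree-$0$ case $\alpha,\beta,\gamma\in\ham$. You also need the degree-$1$ instances, e.g.\ $\Hbrac{\alpha}{\Hbrac{\beta}{f}}=\Hbrac{\Hbrac{\alpha}{\beta}}{f}+\Hbrac{\beta}{\Hbrac{\alpha}{f}}$, which is just $[\Lie_{v_\alpha},\Lie_{v_\beta}]f=\Lie_{[v_\alpha,v_\beta]}f$, and the remaining placements of $f$, which vanish term-by-term since $\Hbrac{f}{\cdot}=0$. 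With that one line added, the argument is complete.
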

\begin{proof}
See Theorem 4.3 in \cite{BHR}.
\end{proof}

\begin{theorem}
\label{semistrict}
If $(X,\omega)$ is a 2-plectic manifold, there is a 
semistrict Lie 2-algebra $L(X,\omega)_\s$ where:
\begin{itemize}
\item the space of 0-chains is $\ham$,
\item the space of 1-chains is $\cinf$,
\item the differential is the exterior derivative $d \maps \cinf \to \ham$,
\item the bracket is $\Sblank$,
\item the alternator is the identity chain map, hence given by the bilinear
map $S \maps \ham \times \ham \to \cinf$ with $S_{\alpha,\beta} = 0$, and
\item the Jacobiator is the trilinear map $J\maps \ham\times \ham\times 
\ham\to \cinf$ defined by $J_{\alpha,\beta,\gamma} = -\ip{\alpha}\ip{\beta}\ip{\gamma}\omega$.
\end{itemize}
\end{theorem}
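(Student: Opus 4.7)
The plan is to verify each of the Lie 2-algebra axioms for the proposed data, leveraging Proposition \ref{semi-bracket} and the algebraic properties of $\omega$ as a closed, antisymmetric $3$-form.

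First, the underlying chain complex $\ham \stackrel{d}{\leftarrow} \cinf$ is well-defined because an exact form $df$ is Hamiltonian with $v_{df} = 0$. Second, to check that $\Sblank$ is a chain map, observe that its degree-$0$ component lands in $\ham$ by Proposition \ref{semi-bracket}(1), while the chain-map identities in the remaining degrees reduce to $\Sbrac{\alpha}{df} = \ip{df}\ip{\alpha}\omega = 0$ for $\alpha \in \ham$ and $f \in \cinf$, which is immediate from $v_{df} = 0$.

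Third, $S = 0$ is a chain homotopy from the bracket to its opposite precisely when these two chain maps agree; in degree $0$ this is the antisymmetry of $\Sblank$ from Proposition \ref{semi-bracket}(2), while in positive degrees both chain maps vanish identically. Fourth, the degree-$0$ chain-homotopy condition for $J$ is, after rearrangement, exactly the identity in Proposition \ref{semi-bracket}(3); in higher degrees both sides vanish. Antisymmetry of $J$ as a trilinear map is inherited from the antisymmetry of $\omega$.

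It remains to verify the four coherence equations. With $S = 0$, equations (2) and (3) collapse to antisymmetry of $J$, and equation (4) is vacuous. The only substantive check is equation (1), which I expect to be the main obstacle. A crucial simplification occurs here: every term of the form $[\cdot,\, J(\cdot,\cdot,\cdot)]$ vanishes because $J$ takes values in $\cinf$ and the semi-bracket with a function is zero, so equation (1) reduces to an identity purely among iterated contractions of $\omega$ by Hamiltonian vector fields and their commutators. Substituting $J_{\alpha,\beta,\gamma} = -\ip{\alpha}\ip{\beta}\ip{\gamma}\omega$ and $v_{\Sbrac{\alpha}{\beta}} = [v_\alpha,v_\beta]$, this identity is established by combining the closure condition $d\omega = 0$, expanded via the global formula for the exterior derivative of a $3$-form on four vector fields, with the fact that $\mathcal{L}_{v_\alpha} \omega = d\iota_{v_\alpha}\omega + \iota_{v_\alpha} d\omega = d(-d\alpha) = 0$ for each Hamiltonian vector field; the resulting cancellations yield the required equality.
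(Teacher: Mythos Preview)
Your proof sketch is correct and follows the natural direct route: verify that $\Sblank$ is a chain map, that $S=0$ and $J$ furnish the required chain homotopies via Proposition \ref{semi-bracket}, and then check the four coherence laws, the only nontrivial one being equation (1), which reduces (after the bracket-with-$J$ terms drop out) to an identity among contractions of $\omega$ that follows from $d\omega=0$ together with $\mathcal{L}_{v_\alpha}\omega=0$.

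The paper itself does not give a proof here at all; it simply cites Theorem 4.4 of \cite{BHR}. Your argument is essentially what that reference carries out, so there is no substantive methodological difference --- you have reconstructed the expected verification rather than taken an alternative path. One minor remark: your final paragraph compresses the actual cancellation for equation (1) quite heavily; in a full write-up you would want to display the reduced identity
\[
J(x,[y,z],w)+J(x,z,[y,w])=J(x,y,[z,w])+J([x,y],z,w)+J(y,[x,z],w)+J(y,z,[x,w])
\]
and exhibit explicitly how substituting the four instances of $\mathcal{L}_{v_\bullet}\omega=0$ into the global formula for $d\omega(v_\alpha,v_\beta,v_\gamma,v_\delta)=0$ produces it, since this bookkeeping is where errors typically occur.
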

\begin{proof}
See Theorem 4.4 in \cite{BHR}.
\end{proof}

A Lie 2-algebra homomorphism is a chain map between the underlying
chain complexes that preserves the bracket up to `coherent chain homotopy'.
More precisely:

\begin{definition}
\label{homo}
Given Lie 2-algebras $L$ and $L'$ with bracket, alternator and 
Jacobiator $\blankbrac$, $S$, $J$ and $\blankbrac^\prime$, $S^\prime$, 
$J^\prime$ respectively, a {\bf homomorphism} from $L$ to $L'$
consists of:
\begin{itemize}
\item{a chain map $\phi=\left(\phi_{0},\phi_{1}\right) \maps L \to L'$, and}
\item{a chain homotopy $\Phi \maps L \tensor L \to L$ from the chain
  map
\[     \begin{array}{ccl}  
     L \tensor L & \to & L^{\prime}   \\
     x \tensor y & \longmapsto & \left [ \phi(x),\phi(y) \right]^{\prime},
  \end{array}
\]
to the chain map
\[     \begin{array}{ccl}  
     L \tensor L & \to & L^{\prime}   \\
     x \tensor y & \longmapsto & \phi \left( [x,y] \right)
  \end{array}
\]
}
\end{itemize}
such that the following equations hold:
\begin{equation}
    S^{\prime}\left(\phi_{0}(x),\phi_{0}(y)\right)-\phi_{1}(S\left(x,y
    \right))=\Phi(x,y)+\Phi(y,x),
\end{equation}
\begin{equation}
     \begin{array}{l}
       J^{\prime}\left(\phi_{0}(x),\phi_{0}(y),\phi_{0}(z)\right)-\phi_{1}\left(J\left(
       x,y,z \right)\right)\\
       =[\phi_{0}(x),\Phi(y,z)]^{\prime}-[\phi_{0}(y),\Phi(x,z)]^{\prime}-[\Phi(x,y),\phi_{0}(z)]^{\prime}-\\
       -\Phi([x,y],z)-\Phi(y,[x,z])+\Phi(x,[y,z]).
     \end{array}
\end{equation}
\end{definition}

The details involved in composing Lie 2-algebra homomorphisms are given
by Roytenberg \cite{Roytenberg}.  We say a Lie 2-algebra homomorphism
with an inverse is an {\bf isomorphism}.

In fact, our previous work \cite{BHR} shows that the hemistrict and
semistrict Lie 2-algebras associated to a 2-plectic manifold are
isomorphic:

\begin{theorem}
\label{isomorphism}
There is a Lie 2-algebra isomorphism 
\[       \phi \maps L(X,\omega)_\h \to L(X,\omega)_\s  \]
given by the identity chain map and a chain 
homotopy $\Phi \maps L \tensor L \to L$ that is nontrivial only 
in degree 0, where it is given by
\[\Phi \left(\alpha,\beta \right)=\ip{\alpha}\beta\] 
for $\alpha,\beta \in L_{0}=\ham$. 
\end{theorem}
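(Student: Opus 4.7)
The plan is to verify directly that the data $(\phi,\Phi)$ satisfy the requirements of Definition~\ref{homo}, and then exhibit an inverse. Since $\phi_0$ and $\phi_1$ are both identity maps, the chain map axioms are automatic. So the work is threefold: (i) check that $\Phi$ is a chain homotopy between the two indicated chain maps $L\otimes L\to L$; (ii) verify the alternator coherence; (iii) verify the Jacobiator coherence. Finally, exhibit an inverse.

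For (i), I would check the chain homotopy condition degree by degree. In degree $0$, one needs $\Hbrac{\alpha}{\beta}-\Sbrac{\alpha}{\beta}=d\,\Phi(\alpha,\beta)$, which is precisely the relation $\Hbrac{\alpha}{\beta}=\Sbrac{\alpha}{\beta}+d\,\ip{\alpha}\beta$ from Proposition~\ref{bracket_relation}. In degree $1$, with $\alpha\otimes f\in\ham\otimes\cinf$, the condition reduces to $\Hbrac{\alpha}{f}-\Sbrac{\alpha}{f}=\Phi(\alpha,df)$, i.e. $\Lie_{v_{\alpha}}f=\ip{\alpha}df$, which is just the identity $\Lie_v f=v(f)$ on functions. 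For $f\otimes\alpha$ and $f\otimes g$ in degree $\ge 1$ all the relevant terms vanish by the definitions of $\Hblank$, $\Sblank$ and $\Phi$.

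For (ii), since $S^{\prime}=0$ and $S_{\alpha,\beta}=-(\ip{\alpha}\beta+\ip{\beta}\alpha)$, the left side of the alternator equation is $\ip{\alpha}\beta+\ip{\beta}\alpha$, and the right side is $\Phi(\alpha,\beta)+\Phi(\beta,\alpha)=\ip{\alpha}\beta+\ip{\beta}\alpha$. For (iii) --- the main obstacle --- I would first observe that every term of the form $[\phi_{0}(-),\Phi(-,-)]^{\prime}$ on the right side of the Jacobiator equation vanishes, because the semi-bracket is identically zero in degree $1$. What remains is the identity
\[
-\ip{\alpha}\ip{\beta}\ip{\gamma}\omega
=-\ip{[v_{\alpha},v_{\beta}]}\gamma
-\ip{\beta}\Lie_{v_{\alpha}}\gamma
+\ip{\alpha}\Lie_{v_{\beta}}\gamma,
\]
which I would establish by combining the Hamiltonian relation $d\gamma=-\ip{\gamma}\omega$ with Cartan's magic formula $\Lie_{v}=d\,\iota_{v}+\iota_{v}\,d$ and the standard expression $d\gamma(X,Y)=X(\gamma(Y))-Y(\gamma(X))-\gamma([X,Y])$ applied to $X=v_{\beta}$, $Y=v_{\alpha}$. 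Rewriting $\ip{v}\Lie_{w}\gamma=\ip{v}\ip{w}d\gamma+v(\ip{w}\gamma)$ on both Lie-derivative terms and applying the Cartan formula for $d\gamma(v_\beta,v_\alpha)$ collapses the right side to $\ip{\alpha}\ip{\beta}d\gamma=-\ip{\alpha}\ip{\beta}\ip{\gamma}\omega$, as required.

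Finally, to see that $\phi$ is an isomorphism, I would verify that the identity chain map together with the chain homotopy $\Phi^{\prime}(\alpha,\beta)=-\ip{\alpha}\beta$ defines a homomorphism $\psi\maps L(X,\omega)_{\s}\to L(X,\omega)_{\h}$ by the same calculations with the roles of $\h$ and $\s$ interchanged and signs flipped; then $\psi$ is inverse to $\phi$ in the 2-categorical sense of \cite{Roytenberg}. The computationally delicate step throughout is the Jacobiator identity, where the Hamiltonian condition is needed to convert the triple contraction of $\omega$ into exterior-calculus expressions involving $\gamma$.
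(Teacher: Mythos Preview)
Your verification is correct. The paper itself does not give a proof here; it simply cites Theorem~4.6 of \cite{BHR}. What you have written is essentially the direct check that must underlie that reference: use Proposition~\ref{bracket_relation} for the degree-$0$ homotopy condition, the identity $\Lie_{v}f=\iota_{v}df$ in degree~$1$, the vanishing of the semistrict alternator for the alternator coherence, and a Cartan-calculus computation reducing the Jacobiator coherence to $\iota_{v_\alpha}\iota_{v_\beta}d\gamma=-\iota_{v_\alpha}\iota_{v_\beta}\iota_{v_\gamma}\omega$. Your treatment of the inverse via $\Phi'(\alpha,\beta)=-\iota_{v_\alpha}\beta$ is also fine, since under composition of Lie 2-algebra homomorphisms with identity underlying chain maps the homotopies add, and $\Phi+\Phi'=0$.

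One minor remark: in your degree-$1$ check you should also note explicitly why $\Phi(df,\alpha)=0$ in the $f\otimes\alpha$ case, namely because the Hamiltonian vector field of an exact $1$-form is zero; you allude to this but it is worth stating, since it is the reason the $\Hbrac{f}{\alpha}=0$ convention is consistent with the chain-homotopy condition.
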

\begin{proof}
See the proof of Theorem 4.6 in \cite{BHR}.
\end{proof}

So, while the semistrict and hemistrict Lie 2-algebras defined above
look different at first sight, we may legitimately speak of `the' Lie
2-algebra associated to a 2-plectic manifold.  We shall refer back to
this result several times in the subsequent sections.

\section{Group Actions on 2-Plectic Manifolds}
\label{group_actions}

Next suppose we have a Lie group acting on a 2-plectic manifold,
preserving the 2-plectic structure.  In this situation both the
hemistrict and semistrict Lie 2-algebras constructed above have Lie
sub-2-algebras consisting of {\it invariant} Hamiltonian forms and
functions.

More precisely, let
$\mu \maps G \times X \to X$ be a left action of the Lie group $G$ on the
2-plectic manifold $\left(X,\omega \right)$, and assume this action
preserves the 2-plectic structure:
\[\mu_{g}^{\ast} \omega = \omega\]
for all $g \in G$.  Denote the subspace of invariant Hamiltonian 1-forms 
as follows:
\[\hamG = \left \{ \alpha \in \ham ~ \vert ~ \forall g \in G ~
\mu_{g}^{\ast}\alpha = \alpha \right\}.\]
The Hamiltonian vector field of an
invariant Hamiltonian 1-form is itself invariant under the
action of $G$:
\begin{prop}
\label{invariant_vector_fields}
If $\alpha \in \hamG$ and $v_{\alpha}$ is the Hamiltonian vector field
associated with $\alpha$, then $\pfor{g} v_{\alpha} = v_{\alpha}$ for
all $g \in G$.
\end{prop}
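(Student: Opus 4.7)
The plan is to exploit the uniqueness of the Hamiltonian vector field guaranteed by the nondegeneracy of $\omega$. The defining equation $d\alpha = -\iota_{v_\alpha}\omega$ characterizes $v_\alpha$ uniquely, so to show that $\pfor{g} v_\alpha = v_\alpha$ it suffices to verify that $\pfor{g} v_\alpha$ satisfies the same defining equation that $v_\alpha$ does.

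First, I would pull back the equation $d\alpha = -\iota_{v_\alpha}\omega$ by $\mu_g$. The left-hand side becomes $\mu_g^{\ast} d\alpha = d(\mu_g^{\ast}\alpha) = d\alpha$, using that $d$ commutes with pullback and that $\alpha \in \hamG$. For the right-hand side I would use the naturality identity
\[
\mu_g^{\ast}(\iota_V \eta) = \iota_{(\mu_{g^{-1}})_{\ast} V}(\mu_g^{\ast}\eta)
\]
valid for any diffeomorphism, any vector field $V$, and any form $\eta$. Applied to $V = v_\alpha$ and $\eta = \omega$, together with the $G$-invariance $\mu_g^{\ast}\omega = \omega$, this yields $\mu_g^{\ast}(\iota_{v_\alpha}\omega) = \iota_{(\mu_{g^{-1}})_{\ast} v_\alpha}\omega$.

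Combining the two sides gives $-\iota_{v_\alpha}\omega = d\alpha = -\iota_{(\mu_{g^{-1}})_{\ast} v_\alpha}\omega$, so
\[
\iota_{\,(\mu_{g^{-1}})_{\ast} v_\alpha - v_\alpha\,}\omega = 0.
\]
Now I would invoke the nondegeneracy condition in Definition \ref{2-plectic_def}: since the map $v \mapsto \iota_v \omega$ is injective on $T_xX$ at each point, the vanishing of $\iota_W \omega$ for the smooth vector field $W = (\mu_{g^{-1}})_{\ast} v_\alpha - v_\alpha$ forces $W = 0$. Hence $(\mu_{g^{-1}})_{\ast} v_\alpha = v_\alpha$ for every $g \in G$, and replacing $g$ by $g^{-1}$ gives $\pfor{g} v_\alpha = v_\alpha$, as required.

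The only mildly subtle point is bookkeeping for pushforward versus pullback of vector fields under $\mu_g$; everything else is a direct application of naturality and the characterization of $v_\alpha$ by nondegeneracy.
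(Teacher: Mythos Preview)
Your proof is correct and follows essentially the same approach as the paper: both use the $G$-invariance of $\alpha$ and $\omega$ together with the defining equation $d\alpha=-\iota_{v_\alpha}\omega$ to show that $v_\alpha$ and its pushforward satisfy the same equation, then conclude by nondegeneracy. The only cosmetic difference is that the paper evaluates on test vectors $\pfor{g}v_1,\pfor{g}v_2$ directly, whereas you package the same computation via the naturality identity $\mu_g^\ast(\iota_V\eta)=\iota_{(\mu_{g^{-1}})_\ast V}(\mu_g^\ast\eta)$ and then replace $g$ by $g^{-1}$ at the end.
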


\begin{proof}
The exterior derivative commutes with the pullback of the group
action. Therefore if $v_{1}, v_{2}$ are smooth vector fields, then
$d\alpha \left(\pfor{g} v_{1},\pfor{g} v_{2}\right)=
d\alpha\left(v_{1},v_{2} \right)$, since we are assuming $\alpha$
is $G$-invariant.  Since $\alpha \in \ham$, then 
$d \alpha = -\ip{\alpha}\omega$, so 
\[
\omega\left(v_{\alpha},\pfor{g} v_{1},\pfor{g} v_{2} \right)
= \omega \left(v_{\alpha},v_{1},v_{2} \right)=
\omega\left(\pfor{g}v_{\alpha},\pfor{g} v_{1},\pfor{g} v_{2} \right),
\]
where the last equality follows from $\pback{g}\omega=\omega$.
Therefore 
\[
\omega\left(v_{\alpha}-\pfor{g}v_{\alpha},\pfor{g} v_{1},\pfor{g} v_{2} \right)=0.
\]
Since $\omega$ is nondegenerate, and $v_{1}, v_{2}$ are arbitrary, it
follows that $\pfor{g}v_{\alpha} = v_{\alpha}$.
\end{proof}

Let $\cinfG$ denote the subspace of invariant smooth functions on $X$:
\[\cinfG = \left\{ f \in \cinf ~ \vert ~ \forall g \in G ~
\mu_{g}^{\ast}f = f \circ \mu_{g} = f \right\},\] 
and let $L^{G}$ denote the 2-term complex composed of $\hamG$ and $\cinfG$:
\[
L^{G} =  
\hamG  
\stackrel{d}{\leftarrow} \cinfG,  
 \]
where $d$ is the exterior derivative. 

The invariant differential forms on $X$ form a graded subalgebra that is
stable under exterior derivative and interior product by an invariant
vector field. Since the hemi-bracket and semi-bracket
introduced in Definitions \ref{hemi-bracket.defn} and
\ref{semi-bracket.defn} are nothing but compositions of these
operations, they restrict to well-defined chain maps:
\[\Hblank \maps L^{G} \otimes L^{G} \to L^{G},\] 
\[\Sblank \maps L^{G} \otimes L^{G} \to L^{G}.\]   
More precisely, we have the following proposition:
\begin{prop}
\label{bracket_closure}
If $\alpha,\beta \in \hamG$ and $f \in \cinfG$, then:
\begin{itemize}
\item{$\Hbrac{\alpha}{f},\Hbrac{f}{\alpha} \in \cinfG$,}
\item{$\Sbrac{\alpha}{f},\Sbrac{f}{\alpha} \in \cinfG$,}
\item{$\Hbrac{\alpha}{\beta}\in \hamG$,}
\item{$\Sbrac{\alpha}{\beta}\in \hamG$.} 
\end{itemize}
\end{prop}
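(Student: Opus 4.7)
The plan is to reduce the four assertions to verifying, for each bracket, that it lies in the expected chain-degree and is $G$-invariant. The chain-degree part is mostly already handled by Propositions~\ref{hemi-bracket}(1) and~\ref{semi-bracket}(1), which exhibit $[v_\alpha,v_\beta]$ as a Hamiltonian vector field for both $\Hbrac{\alpha}{\beta}$ and $\Sbrac{\alpha}{\beta}$; so these are genuine Hamiltonian 1-forms. The three mixed-degree brackets $\Hbrac{f}{\alpha}$, $\Sbrac{\alpha}{f}$, and $\Sbrac{f}{\alpha}$ are all zero by the degree-$1$ formulas given just before Theorems~\ref{hemistrict} and~\ref{semistrict}, so they lie in $\cinfG$ tautologically.

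The real content is the $G$-invariance of the three nontrivial brackets
\[
\Hbrac{\alpha}{f} = \lie{\alpha}{f},\qquad \Hbrac{\alpha}{\beta} = \lie{\alpha}{\beta},\qquad \Sbrac{\alpha}{\beta} = \ip{\beta}\ip{\alpha}\omega.
\]
The key observation---already hinted at in the paragraph preceding the statement---is that the graded algebra of $G$-invariant forms on $X$ is closed under $d$ (since $\pback{g}$ commutes with $d$) and under $\iota_V$ whenever $V$ is itself $G$-invariant. The latter follows from the naturality identity $\pback{g}(\iota_V \tau) = \iota_{\pfor{g^{-1}} V}(\pback{g}\tau)$, which collapses to $\iota_V\tau$ as soon as both $V$ and $\tau$ are $G$-invariant.

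With this principle in hand, and using Proposition~\ref{invariant_vector_fields} to ensure that $v_\alpha$ and $v_\beta$ are invariant vector fields, each bracket becomes a composition of $d$'s and interior products by invariant vector fields applied to invariant starting forms. For $\Sbrac{\alpha}{\beta}$ this is immediate from the definition. For the hemi-brackets I would apply Cartan's magic formula to write $\lie{\alpha}{\beta} = d(\ip{\alpha}\beta) + \ip{\alpha}d\beta$ and $\lie{\alpha}{f} = \ip{\alpha}df$; each summand is manifestly invariant.

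The only piece of genuine work has already been done: Proposition~\ref{invariant_vector_fields} exploited the nondegeneracy of $\omega$ to transfer invariance from a Hamiltonian 1-form to its Hamiltonian vector field. Beyond citing that result and the elementary naturality identity for $\iota_V$ under diffeomorphisms, the proof is a routine unpacking of exterior calculus, and I anticipate no further obstacle.
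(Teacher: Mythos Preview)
Your proposal is correct and follows essentially the same approach as the paper: invoke Proposition~\ref{invariant_vector_fields} to get invariance of $v_\alpha$, note that the invariant forms are closed under $d$ and under $\iota_V$ for invariant $V$, then expand the hemi-bracket via Cartan's formula. If anything, you are slightly more explicit than the paper in spelling out the naturality identity for $\iota_V$ and in separately treating $\Hbrac{\alpha}{f}$, which the paper's proof leaves implicit.
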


\begin{proof}
By definition,
$\Hbrac{f}{\alpha}=\Sbrac{\alpha}{f}=\Sbrac{f}{\alpha}=0$. Therefore
they are trivially invariant under the group action. If $\alpha$ is an
invariant Hamiltonian 1-form, then by Proposition
\ref{invariant_vector_fields} its Hamiltonian vector field
$v_{\alpha}$ is invariant. Since $\Hbrac{\alpha}{\beta}=
\lie{\alpha}{\beta}= d\ip{\alpha}\beta+\ip{\alpha}d\beta$ and
$\Sbrac{\alpha}{\beta}=\ip{\beta}\ip{\alpha}\omega$, it then follows
from the above remarks that $\Hbrac{\alpha}{\beta}$ and
$\Sbrac{\alpha}{\beta}$ are themselves invariant.
\end{proof}

An important consequence of Proposition \ref{bracket_closure} is that
$L^G$ forms a Lie sub-2-algebra of the hemistrict Lie 2-algebra
$L(X,\omega)_\h$, which we call $L(X,\omega)_\h^G$.  It also forms a
Lie sub-2-algebra of $L(X,\omega)_\s$, which we call
$L(X,\omega)_\s^G$.  Furthermore, the isomorphism
\[   L(X,\omega)_\h \cong L(X,\omega)_\s  \]
of Theorem \ref{isomorphism} restricts to an isomorphism between these
Lie sub-2-algebras:
\[  L(X,\omega)^{G}_\h \cong L(X,\omega)^{G}_\s .\]
We summarize these results in the next three corollaries:

\begin{corollary}
\label{hemistrictG}
If $\mu \maps G \times X \to X$ is a left action of the Lie group $G$ on the
2-plectic manifold $\left(X,\omega \right)$ and for all $g \in
G, ~ \mu_{g}^{\ast} \omega = \omega$, then there is 
a hemistrict Lie 2-algebra $L(X,\omega)^{G}_\h$,
where:
\begin{itemize}
\item the space of 0-chains is $\hamG$,
\item the space of 1-chains is $\cinfG$,
\item the differential is the exterior derivative $d \maps \cinfG \to \hamG$,
\item the bracket is $\Hblank$,
\item the alternator is the bilinear map $S \maps \hamG \times \hamG \to \cinfG$ 
defined by $S_{\alpha,\beta}= -(\ip{\alpha}\beta + \ip{\beta}\alpha)$, and
\item the Jacobiator is the identity, hence given by the trilinear map
$J \maps \hamG \times \hamG \times \hamG \to \cinfG$ with 
$J_{\alpha,\beta,\gamma} = 0$.
\end{itemize}
\end{corollary}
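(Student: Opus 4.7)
The plan is to show that the hemistrict Lie 2-algebra $L(X,\omega)_\h$ of Theorem \ref{hemistrict} restricts cleanly to the $G$-invariant subspaces. Since all the defining identities of a hemistrict Lie 2-algebra already hold in the ambient $L(X,\omega)_\h$, the only thing to verify is \emph{closure}: that the differential $d$, the hemi-bracket $\Hblank$, the alternator $S$, and the (trivial) Jacobiator all send invariant inputs to invariant outputs. Once closure is in hand, the algebraic axioms are inherited verbatim from Theorem \ref{hemistrict} by restriction.

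First I would check that $L^G = \hamG \stackrel{d}{\leftarrow} \cinfG$ is a well-defined 2-term complex. If $f \in \cinfG$, then $df$ is exact, hence Hamiltonian with zero Hamiltonian vector field; and since the exterior derivative commutes with pullback, $\mu_g^\ast df = d\mu_g^\ast f = df$, so $df \in \hamG$. Next, closure of the hemi-bracket on all three bidegrees is given directly by Proposition \ref{bracket_closure}, which covers $\Hbrac{\alpha}{\beta} \in \hamG$, $\Hbrac{\alpha}{f} \in \cinfG$, and $\Hbrac{f}{\alpha} = 0$.

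The main new verification is closure of the alternator $S_{\alpha,\beta} = -(\ip{\alpha}\beta + \ip{\beta}\alpha)$. For $\alpha,\beta \in \hamG$, Proposition \ref{invariant_vector_fields} says that $v_\alpha$ and $v_\beta$ are $G$-invariant vector fields. For any $x \in X$ and $g \in G$, invariance of $v_\alpha$ and then of $\beta$ gives
\[
\bigl(\mu_g^\ast(\ip{\alpha}\beta)\bigr)(x) = \beta_{\mu_g(x)}\bigl(v_\alpha|_{\mu_g(x)}\bigr) = \beta_{\mu_g(x)}\bigl(\mu_{g\ast} v_\alpha|_x\bigr) = \beta_x\bigl(v_\alpha|_x\bigr),
\]
so $\ip{\alpha}\beta \in \cinfG$; the symmetric argument shows $\ip{\beta}\alpha \in \cinfG$, whence $S_{\alpha,\beta} \in \cinfG$. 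The Jacobiator is identically zero and is therefore trivially invariant.

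At this point the structure maps of Theorem \ref{hemistrict} all restrict to the subcomplex $L^G$, and the four coherence equations in the definition of a Lie 2-algebra are pointwise identities that continue to hold on the restricted domain. The only mildly delicate step is the alternator invariance computation above; everything else is either immediate from Proposition \ref{bracket_closure} or a direct quotation of Theorem \ref{hemistrict}.
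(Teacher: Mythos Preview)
Your proposal is correct and follows essentially the same approach as the paper: the paper presents this corollary without a separate proof, relying on the preceding remark that invariant differential forms are stable under exterior derivative and interior product by invariant vector fields, together with Propositions \ref{invariant_vector_fields} and \ref{bracket_closure}, so that all the structure maps of Theorem \ref{hemistrict} restrict to $L^G$. Your explicit verification of alternator invariance is simply a spelled-out instance of that general stability principle.
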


\begin{corollary}
\label{semistrictG}
If $\mu \maps G \times X \to X$ is a left action of the Lie group $G$ on the
2-plectic manifold $\left(X,\omega \right)$ and for all $g \in
G, \mu_{g}^{\ast} \omega = \omega$, then there is a 
semistrict Lie 2-algebra $L(X,\omega)^{G}_\s$ where:
\begin{itemize}
\item the space of 0-chains is $\hamG$,
\item the space of 1-chains is $\cinfG$,
\item the differential is the exterior derivative $d \maps \cinfG \to \hamG$,
\item the bracket is $\Sblank$,
\item the alternator is the identity, hence given by the bilinear
map $S \maps \hamG \times \hamG \to \cinfG$ with $S_{\alpha,\beta} = 0$, and
\item the Jacobiator is the trilinear map $J\maps \hamG\times \hamG\times 
\hamG\to \cinfG$ defined by $J_{\alpha,\beta,\gamma} = -\ip{\alpha}\ip{\beta}\ip{\gamma}\omega$.
\end{itemize}
\end{corollary}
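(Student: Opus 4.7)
The plan is to deduce this corollary in the same way that Corollary \ref{hemistrictG} is deduced from Theorem \ref{hemistrict}: namely, to show that the complex $L^G$, equipped with the restrictions of the bracket, alternator, and Jacobiator from $L(X,\omega)_\s$, forms a Lie sub-2-algebra. Once the relevant structure maps are shown to restrict to $L^G$, all the defining identities of a Lie 2-algebra hold automatically because they already hold in the ambient semistrict Lie 2-algebra $L(X,\omega)_\s$ provided by Theorem \ref{semistrict}.

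First I would verify that the underlying 2-term complex is well-defined. If $f \in \cinfG$, then $df$ is $G$-invariant because $d$ commutes with the pullback $\mu_g^{\ast}$, and $df$ lies in $\ham$ because every exact 1-form is Hamiltonian with zero Hamiltonian vector field; hence $d$ restricts to a map $\cinfG \to \hamG$. Next, the bracket $\Sblank$ restricts to a chain map $L^G \otimes L^G \to L^G$ by Proposition \ref{bracket_closure}. The alternator is identically zero, so its restriction to invariants is trivially well-defined.

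The one genuinely new check is that the Jacobiator restricts, i.e., $J_{\alpha,\beta,\gamma} = -\ip{\alpha}\ip{\beta}\ip{\gamma}\omega$ lies in $\cinfG$ whenever $\alpha,\beta,\gamma \in \hamG$. For this I would combine Proposition \ref{invariant_vector_fields}, which tells us that each Hamiltonian vector field $v_\alpha,v_\beta,v_\gamma$ is $G$-invariant, with the hypothesis $\mu_g^{\ast}\omega = \omega$. Since the pullback of a form by a diffeomorphism commutes with interior products by the pushforwards of the corresponding vector fields, we get
\[
\mu_g^{\ast}(\ip{\alpha}\ip{\beta}\ip{\gamma}\omega) = \iota_{\mu_{g^{-1}\ast} v_\alpha}\iota_{\mu_{g^{-1}\ast} v_\beta}\iota_{\mu_{g^{-1}\ast} v_\gamma}\mu_g^{\ast}\omega = \ip{\alpha}\ip{\beta}\ip{\gamma}\omega,
\]
so $J_{\alpha,\beta,\gamma}$ is invariant.

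With all structure maps shown to preserve the invariant subspaces, the coherence equations for a Lie 2-algebra need no independent verification: they are equalities of forms on $X$ that already hold for arbitrary elements of $L(X,\omega)_\s$, and they continue to hold on the subspace $L^G$. I do not expect any substantial obstacle; the only subtlety is the bookkeeping with pushforwards and interior products in the invariance computation for $J_{\alpha,\beta,\gamma}$, but this is a direct consequence of $\mu_g$ being a diffeomorphism preserving $\omega$ together with Proposition \ref{invariant_vector_fields}.
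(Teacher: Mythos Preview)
Your proposal is correct and follows essentially the same approach as the paper, which treats the corollary as immediate from Proposition \ref{bracket_closure} and the remark that invariant forms are stable under $d$ and under interior product by invariant vector fields. You supply slightly more detail than the paper does---in particular the explicit invariance check for $J_{\alpha,\beta,\gamma}$---but this is exactly the argument the paper has in mind and there is no substantive difference.
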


\begin{corollary}
\label{isomorphismG}
The isomorphism 
\[       \phi \maps L(X,\omega)_\h \to L(X,\omega)_\s  \]
restricts to a Lie 2-algebra isomorphism
\[       \phi \maps L(X,\omega)^{G}_\h \to L(X,\omega)^{G}_\s .\]
\end{corollary}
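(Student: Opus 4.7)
The plan is to unpack the definition of the isomorphism $\phi$ from Theorem \ref{isomorphism} and verify that every piece of its data restricts cleanly to the $G$-invariant subcomplexes. Since $\phi$ consists of the identity chain map $L \to L$ together with a chain homotopy $\Phi$ nontrivial only in degree 0, the restriction of the chain map part is automatic; the content of the proof lies in showing that $\Phi$ also restricts, and then that the defining equations of a Lie 2-algebra homomorphism (from Definition \ref{homo}) continue to hold on the subcomplex because they hold on the ambient complex.

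First I would recall the constituents. The underlying chain complex $L^G$ sits inside $L$ as a subcomplex (the differential restricts because $d$ preserves $G$-invariance), and by Corollaries \ref{hemistrictG} and \ref{semistrictG} the brackets, alternators, and Jacobiators of $L(X,\omega)_\h$ and $L(X,\omega)_\s$ restrict to $L^G$. The identity chain map on $L^G$ is clearly a chain map $L(X,\omega)^G_\h \to L(X,\omega)^G_\s$. So the only nontrivial check is that $\Phi \maps L \tensor L \to L$, defined in degree $0$ by $\Phi(\alpha,\beta) = \ip{\alpha}\beta$, sends $L^G \tensor L^G$ into $L^G$.

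Next I would verify this invariance claim. For $\alpha,\beta \in \hamG$, Proposition \ref{invariant_vector_fields} gives $\pfor{g} v_{\alpha} = v_{\alpha}$, and $\mu_g^\ast \beta = \beta$ by hypothesis. Since the interior product is natural under diffeomorphism,
\[
\mu_{g}^{\ast}\bigl(\ip{\alpha}\beta\bigr) \;=\; \iota_{\pfor{g^{-1}} v_{\alpha}}\mu_{g}^{\ast}\beta \;=\; \ip{\alpha}\beta,
\]
so $\Phi(\alpha,\beta) \in \cinfG$, and $\Phi$ restricts to a chain homotopy on $L^G \tensor L^G$. The two coherence equations of Definition \ref{homo} relating $\phi$, $\Phi$, the brackets, alternators, and Jacobiators already hold identically on all of $L \tensor L$ and $L \tensor L \tensor L$ by Theorem \ref{isomorphism}, so in particular they hold when restricted to the $G$-invariant subspaces. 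Hence the restricted data defines a homomorphism $\phi \maps L(X,\omega)^G_\h \to L(X,\omega)^G_\s$.

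Finally I would argue invertibility. The inverse of $\phi$ in Theorem \ref{isomorphism} is again given by the identity chain map together with a chain homotopy built from the same formula $\ip{\alpha}\beta$ (with an appropriate sign), so the identical invariance argument shows it restricts to a homomorphism $L(X,\omega)^G_\s \to L(X,\omega)^G_\h$ inverse to $\phi$. I expect no genuine obstacle here: the only potentially subtle point is checking that $\ip{\alpha}\beta$ is $G$-invariant when $\alpha,\beta \in \hamG$, and that reduces immediately to Proposition \ref{invariant_vector_fields} plus naturality of $\iota$.
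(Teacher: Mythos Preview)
Your proposal is correct and follows essentially the same approach as the paper: both reduce the claim to showing that $\Phi(\alpha,\beta)=\ip{\alpha}\beta$ lands in $\cinfG$ whenever $\alpha,\beta\in\hamG$, and both invoke Proposition~\ref{invariant_vector_fields} for this. Your version is simply more explicit---spelling out the naturality of the interior product, the automatic restriction of the coherence equations, and the invertibility---whereas the paper's proof is a terse two-sentence sketch.
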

\begin{proof}
As mentioned in Theorem \ref{isomorphism},
the isomorphism between $L(X,\omega)_\h$ and $L(X,\omega)_\s$ is given
by the identity chain map and a chain homotopy $\Phi$
that is nontrivial only in degree 0: 
\[\Phi \left(\alpha,\beta \right)=\ip{\alpha}\beta,\] 
where $\alpha,\beta \in L_{0}=\ham$. If $\alpha$ and $\beta$ are
invariant under the action of $G$, then it follows from Proposition
\ref{invariant_vector_fields} that $\ip{\alpha}\beta$ is an invariant
smooth function. Hence the chain homotopy $\Phi$ restricts to a chain
homotopy $\Phi \maps L^{G} \tensor L^{G} \to L^{G}$.
\end{proof}

\section{The 2-Plectic Structure on a Compact Simple Lie Group}
\label{2-plectic_CSLG}

Every compact simple Lie group has a canonical 2-plectic
structure. This structure has been discussed in the multisymplectic
geometry literature \cite{Cantrijn:1999,Ibort:2000}, and plays a
crucial role in several branches of mathematics connected to string
theory, including the theory of affine Lie algebras, central
extensions of loop groups, gerbes, and Lie 2-groups
\cite{BCSS,Brylinski,CJMSW,PressleySegal}.

Recall that if $G$ is a compact Lie group, then its Lie algebra $\g$ 
admits an inner product $\innerprod{\cdot}{\cdot}$ that is invariant
under the adjoint representation $\Ad \maps G \to \Aut\left(\g
\right)$.  
For any nonzero real number $k$, we can define a trilinear form
\[ \theta_{k}(x,y,z)= k \innerprod{x}{[y,z]}\]
for any $x,y,z \in \g$. 
Since the inner product is invariant under the adjoint representation,
it follows that the linear transformations $\ad_y \maps \g \to \g$
given by $\ad_y(x) = [y,x]$ are skew adjoint. 
That is,
$\innerprod{\ad_{y}(x)}{z}= -\innerprod{x}{\ad_{y}(z)}$ for all
$x,y,z \in \g$. It follows that $\theta_{k}$ is totally antisymmetric.
Moreover, $\theta_k$ is invariant under the adjoint representation since
$\left[\Ad_{g}(x),\Ad_{g}(y) \right] = \Ad_{g}\left([x,y] \right)$.

Let $L_{g} \maps G \to G$ and $R_{g} \maps G \to G$ denote left and
right translation by $g$, respectively. Using left translation, we can
extend $\theta_{k}$ to a left invariant 3-form $\nu_{k}$ on $G$. More
precisely, given $g \in G$ and $v_{1},v_{2},v_{3} \in T_{g} G$ define
a smooth section $\nu_{k}$ of $\Lambda^3 T^{\ast}G$ by
\[ \nu_{k} \vert_{g} \left(v_{1},v_{2},v_{3} \right) = 
\theta_{k} \left(L_{g^{-1}\ast} v_{1},L_{g^{-1}\ast} v_{2},L_{g^{-1}\ast}
v_{3} \right).\] 
It is straightforward to show that $\nu_{k}$ is also a right invariant
3-form. Indeed, since $\Ad_{g} =L_{g \ast} \circ R_{g^{-1} \ast}$, the
invariance of $\theta_{k}$ under the adjoint representation implies 
$R^{\ast}_{g}\nu_{k} =\nu_{k}$. From the left and right invariance we
can conclude
\[d\nu_{k}=0\] 
since any $p$-form on a Lie group that is both left and right 
invariant is closed. 

Now suppose that $G$ is a compact simple Lie group. Then $\g$ is
simple, so it has a canonical invariant inner product: the Killing 
form, normalized to taste.  With this choice of inner product, 
the trilinear form $\theta_{k}$ is nondegenerate in the sense of 
Definition \ref{2-plectic_def}:
\begin{prop} \label{nondegenerate}
Let $G$ be a compact simple Lie group with Lie algebra $\g$. If $x \in
\g$ and $\theta_{k}(x,y,z)=0$ for all $y,z \in \g$ then $x=0$.
\end{prop}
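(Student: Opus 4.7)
The plan is to unwind the definition of $\theta_k$ and reduce the claim to the nondegeneracy of the invariant inner product together with the perfectness of a simple Lie algebra. Suppose $x \in \g$ satisfies $\theta_k(x,y,z) = 0$ for all $y,z \in \g$. Since $k \neq 0$, this is equivalent to $\innerprod{x}{[y,z]} = 0$ for every $y, z \in \g$. In other words, $x$ is orthogonal (with respect to $\innerprod{\cdot}{\cdot}$) to the derived subalgebra $[\g,\g]$.

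The key step is to show that $[\g,\g] = \g$. Note that $[\g,\g]$ is an ideal of $\g$, since $[z,[y_1,y_2]] \in [\g,\g]$ for any $z,y_1,y_2 \in \g$. Because $\g$ is simple, its only ideals are $0$ and $\g$ itself. If $[\g,\g] = 0$, then $\g$ would be abelian, which contradicts simplicity (simple Lie algebras are by convention nonabelian). Hence $[\g,\g] = \g$.

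It follows that $x$ is orthogonal to all of $\g$. Since the normalized Killing form is a nondegenerate inner product on $\g$, this forces $x = 0$, completing the proof.

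There is really no substantial obstacle: once one sees that the question is whether $x \perp [\g,\g]$ forces $x = 0$, the argument is simply the combination of perfectness of simple Lie algebras and nondegeneracy of the invariant inner product. The only small care needed is the elementary verification that $[\g,\g]$ is an ideal and the reminder that ``simple'' excludes the abelian case.
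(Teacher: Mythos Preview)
Your proof is correct and uses the same key ingredients as the paper's: perfectness of a simple Lie algebra ($[\g,\g]=\g$) together with nondegeneracy of the Killing form. If anything, your formulation is slightly cleaner, since you argue that $x$ is orthogonal to the span of all brackets, whereas the paper writes $x$ itself as a single bracket $[y,z]$ and then evaluates $\theta_k(x,y,z)=k\innerprod{x}{x}$ --- a step that, strictly speaking, needs a little more than $[\g,\g]=\g$ to justify.
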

\begin{proof}
Recall that if $\g$ is simple, then it is equal to its derived algebra
$\brac{\g}{\g}$. Therefore there exists $y,z \in \g$ such that
$x=\brac{y}{z}$. Since $\innerprod{\cdot}{\cdot}$ is an inner product, 
$\theta_{k}(x,y,z)=k \innerprod{x}{[y,z]}=k \innerprod{x}{x}=0$
implies $x=0$.
\end{proof}
It is easy to see that the nondegeneracy of $\theta_{k}$ implies the
nondegeneracy of $\nu_{k}$. Therefore $\nu_{k}$ is a closed,
nondegenerate 3-form. We summarize the preceding discussion in the
following proposition:

\begin{prop} \label{G_is_2-plectic}
Let $G$ be a compact simple Lie group with Lie algebra $\g$. Let 
$\innerprod{\cdot}{\cdot}$ be the Killing form, and let $k$ be a
nonzero real number.  The left-invariant 3-form $\nu_{k}$ on $G$ 
corresponding to $\theta_k \in \Lambda^3 \g^*$ is 2-plectic.
So, $\left(G,\nu_{k} \right)$ is a 2-plectic manifold.
\end{prop}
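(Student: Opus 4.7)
The plan is to verify the two conditions of Definition \ref{2-plectic_def} --- closedness and nondegeneracy --- both of which have been essentially set up by the discussion preceding the proposition. So the proof reduces to assembling already-established facts.

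First, I would dispose of closedness. The 3-form $\nu_k$ was constructed to be left-invariant by definition, and the paragraph above the proposition already observes that since $\theta_k$ is $\Ad$-invariant (using skew-adjointness of $\ad_y$ together with invariance of the Killing form) and $\Ad_g = L_{g\ast}\circ R_{g^{-1}\ast}$, one has $R_g^{\ast}\nu_k = \nu_k$. Hence $\nu_k$ is bi-invariant. I would then invoke the standard fact that any bi-invariant $p$-form on a Lie group is closed (this follows from the Maurer--Cartan equation, or from averaging, and is exactly the statement quoted just before the proposition), giving $d\nu_k = 0$.

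Second, I would handle nondegeneracy. Suppose $g\in G$ and $v\in T_g G$ with $\iota_v \nu_k\vert_g = 0$. By the defining formula
\[
\nu_{k}\vert_{g}(v_{1},v_{2},v_{3}) = \theta_{k}(L_{g^{-1}\ast} v_{1},L_{g^{-1}\ast} v_{2},L_{g^{-1}\ast} v_{3}),
\]
this translates to $\theta_k(L_{g^{-1}\ast}v,\,y,\,z)=0$ for all $y,z\in\g$, using that $L_{g^{-1}\ast}\maps T_g G\to\g$ is a linear isomorphism (so every pair $(y,z)\in\g\times\g$ arises). Proposition \ref{nondegenerate} then forces $L_{g^{-1}\ast}v = 0$, and injectivity of $L_{g^{-1}\ast}$ gives $v=0$.

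I do not anticipate any real obstacle: the content is genuinely a summary, and the two verifications above are each only a line or two once Proposition \ref{nondegenerate} and the bi-invariance observation are in hand. The only subtlety worth being careful about is the direction of the reduction in the nondegeneracy step --- one must translate the test vector back to the identity and appeal to surjectivity of $L_{g^{-1}\ast}$ on the auxiliary slots to apply the Lie-algebra nondegeneracy cleanly.
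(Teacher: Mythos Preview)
Your proposal is correct and mirrors the paper's approach exactly: the proposition is explicitly a summary of the preceding discussion, which establishes closedness via bi-invariance and nondegeneracy by reducing to Proposition~\ref{nondegenerate} through left translation. Your write-up simply makes the nondegeneracy reduction more explicit than the paper's one-line ``it is easy to see,'' but the content is identical.
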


Now we wish to identify the Hamiltonian 1-forms associated with the
2-plectic structure $\nu_{k}$ that are invariant under left
translation.  We denote the space of all left invariant 1-forms as
$\g^{\ast}$. The left invariant Hamiltonian 1-forms, their
corresponding Hamiltonian vector fields, and the left invariant smooth
real-valued functions will be denoted as $\hamL$, $\Vect
\left(G\right)^L$, and $\cinfL$, respectively.

If $f \in \cinfL$, then by definition $f=f \circ L_{g}$ for all $g \in
G$. Hence $f$ must be a constant function, so $\cinfL$ may be identified 
with $\R$.  The following theorem describes the left invariant Hamiltonian 
1-forms:
\begin{theorem}
\label{left_invariant_1-forms}
Every left invariant 1-form on $\left(G,\nu_{k}\right)$ is
Hamiltonian. That is, $\hamL = \g^{\ast}$.
\end{theorem}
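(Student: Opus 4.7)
The plan is to construct, for each left-invariant 1-form $\alpha \in \g^*$, an explicit candidate Hamiltonian vector field $v_\alpha$ and then verify the Hamiltonian condition $d\alpha = -\iota_{v_\alpha} \nu_k$ using only the behavior at the identity, exploiting left-invariance throughout. Since $\alpha$ is left-invariant, so is $d\alpha$. If we take $v_\alpha$ to be a left-invariant vector field, then $\iota_{v_\alpha}\nu_k$ is also left-invariant, because $\nu_k$ is left-invariant by construction. Consequently the equation $d\alpha = -\iota_{v_\alpha}\nu_k$ reduces to an identity of elements of $\Lambda^2 \g^*$, namely at $e \in G$.

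At the identity, I would invoke the Maurer--Cartan formula: for $x, y \in \g$ viewed as left-invariant vector fields, the function $\alpha(y)$ is constant, so
\[
d\alpha\vert_e(x,y) = -\alpha_e([x,y]).
\]
On the other hand, if $v_\alpha$ is the left-invariant extension of some $x_0 \in \g$, then at the identity
\[
\iota_{v_\alpha}\nu_k\vert_e(x,y) = \theta_k(x_0,x,y) = k\,\innerprod{x_0}{[x,y]}.
\]
So the problem reduces to finding $x_0 \in \g$ satisfying $\alpha_e([x,y]) = k\innerprod{x_0}{[x,y]}$ for all $x,y \in \g$.

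The key input is simplicity of $\g$: since $G$ is compact and simple, $\g = [\g,\g]$, so the displayed condition is equivalent to $\alpha_e(w) = k\innerprod{x_0}{w}$ for all $w \in \g$. The Killing form is nondegenerate, so the map $\g \to \g^*$ sending $x_0 \mapsto k\innerprod{x_0}{\cdot}$ is a linear isomorphism; this uniquely produces the required $x_0$ from $\alpha_e$. Defining $v_\alpha$ as the left-invariant vector field extending this $x_0$ then gives the desired Hamiltonian vector field, and left-invariance propagates the equality from $e$ to all of $G$.

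The only real obstacle is the step where I pass from ``agreement on the subspace $[\g,\g]$'' to ``agreement on all of $\g$.'' This is precisely where simplicity of $\g$ is needed, rather than merely compactness; the argument would fail for $G = U(1)^n$, where $[\g,\g] = 0$. Modulo this single point, the argument is essentially linear algebra on $\g$ together with the Maurer--Cartan identity, and the conclusion $\hamL = \g^*$ follows.
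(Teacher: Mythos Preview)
Your argument is correct and essentially identical to the paper's: reduce to the identity by left-invariance, apply Maurer--Cartan to get $d\alpha(x,y) = -\alpha([x,y])$, and use the inner-product duality to produce $v_\alpha$. The one difference is that your detour through $\g = [\g,\g]$ is unnecessary here: the paper simply \emph{defines} $v_\alpha$ by $\alpha(x) = k\langle v_\alpha, x\rangle$ using nondegeneracy of the inner product alone, and then $d\alpha(x,y) = -k\langle v_\alpha,[x,y]\rangle = -\nu_k(v_\alpha,x,y)$ follows by substitution. Simplicity plays no role in this existence argument; it is invoked only in the subsequent proposition, where $\g = [\g,\g]$ (equivalently $H^1(\g,\R)=0$) is used to establish \emph{uniqueness} of the left-invariant Hamiltonian $1$-form attached to a given vector field.
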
 
\begin{proof}
Recall that if $\alpha$ is a smooth 1-form and $v_{0},v_{1}$
are smooth vector fields on any manifold, then
\[d\alpha\left(v_{0},v_{1}\right)= v_{0} \left(\alpha
\left(v_{1}\right)\right) - v_{1}\left(\alpha\left(v_{0}\right)\right)
- \alpha \left(\left[v_{0},v_{1}\right]\right).\]
Suppose now that $\alpha$ is a left invariant 1-form on $G$ and 
$v_{0},v_{1}$ are left invariant vector fields.  Then the smooth functions
$\alpha\left(v_{1}\right)$ and $\alpha\left(v_{0}\right)$ are also left 
invariant and therefore constant. Therefore the right hand side of
the above equality simplifies and we have
\[d\alpha\left(v_{0},v_{1}\right)= - \alpha \left(\left[v_{0},v_{1}\right]\right).\]

Let $\alpha \in \g^{\ast}$ and let $\innerprod{\cdot}{\cdot}$ be the
inner product on $\g$ used in the construction of
$\nu_{k}$. There exists a left invariant vector field $v_{\alpha} \in \g$
such that $\alpha(x)=k\innerprod{v_{\alpha}}{x}$ for all $x \in
\g$. Combining this with the above expression for $d \alpha$ gives
\[d \alpha\left(x,y\right)=-k\innerprod{v_{\alpha}}{[x,y]},\]
which implies
\[d \alpha = -\ip{\alpha}\nu_{k}.\] 
Hence $\alpha \in \Gham$, and $\hamL=\Gham \cap \g^{\ast} =\g^{\ast}$.
\end{proof}

The most important application of Theorem \ref{left_invariant_1-forms}
is that it allows us to use Corollaries \ref{hemistrictG} and
\ref{semistrictG} to construct hemistrict and semistrict Lie
2-algebras both having $\g^{\ast}$ as their space of 0-chains, 
where $\g$ is the Lie algebra of a compact simple Lie group.  
We summarize these facts in the following two corollaries:

\begin{corollary}
\label{hemistrictG2}
If $G$ is a compact simple Lie group with Lie algebra $\g$ and 2-plectic structure $\nu_{k}$,
then there is a hemistrict Lie 2-algebra $L(G,k)_\h$
where:
\begin{itemize}
\item the space of 0-chains is $\g^{\ast}$,
\item the space of 1-chains is $\R$,
\item the differential is the exterior derivative $d \maps \R \to
  \g^{\ast}$ (i.e.\ $d=0$), 
\item the bracket is $\Hblank$,
\item the alternator is the bilinear map $S \maps \g^{\ast} \times \g^{\ast} \to \R$ 
defined by $S_{\alpha,\beta}= -(\ip{\alpha}\beta + \ip{\beta}\alpha)$, and
\item the Jacobiator is the identity, hence given by the trilinear map
$J \maps \g^{\ast} \times \g^{\ast} \times \g^{\ast} \to \R$ with 
$J_{\alpha,\beta,\gamma} = 0$.
\end{itemize}
\end{corollary}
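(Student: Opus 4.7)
The plan is to show that Corollary \ref{hemistrictG2} is a direct consequence of Corollary \ref{hemistrictG}, applied to the left action of $G$ on itself with the 2-plectic structure $\nu_k$ provided by Proposition \ref{G_is_2-plectic}. The main task is just to identify each piece of data in Corollary \ref{hemistrictG} with the corresponding piece in the statement at hand.

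First I would verify that the hypothesis of Corollary \ref{hemistrictG} is satisfied. Let $\mu \maps G \times G \to G$ be left multiplication, so that $\mu_g = L_g$. Since $\nu_k$ was constructed to be left-invariant, we have $L_g^{\ast}\nu_k = \nu_k$ for every $g \in G$, so left multiplication preserves the 2-plectic structure. Thus Corollary \ref{hemistrictG} produces a hemistrict Lie 2-algebra $L(G,\nu_k)^G_\h$ whose 0-chains are $\mathrm{Ham}(G)^G = \hamL$, whose 1-chains are $C^\infty(G)^G = \cinfL$, and whose differential, bracket, alternator, and Jacobiator are the ones listed there.

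Next I would identify the chain complex. By Theorem \ref{left_invariant_1-forms}, every left invariant 1-form is Hamiltonian, so $\hamL = \g^{\ast}$. A left invariant smooth function on $G$ must be constant, so $\cinfL \cong \R$. The exterior derivative sends constants to zero, so the differential $d \maps \R \to \g^{\ast}$ is the zero map. I would then define $L(G,k)_\h \mathrel{:=} L(G,\nu_k)^G_\h$ under these identifications.

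Finally I would check that the bracket, alternator and Jacobiator take the stated form. These are inherited verbatim from Corollary \ref{hemistrictG}: the bracket is $\Hblank$, the alternator is $S_{\alpha,\beta} = -(\ip{\alpha}\beta + \ip{\beta}\alpha)$, and the Jacobiator vanishes. The only thing worth remarking is that these expressions actually land in the correct invariant subspaces; but this is exactly the content of Proposition \ref{bracket_closure}, which is already used in the proof of Corollary \ref{hemistrictG}. There is no genuine obstacle here: the corollary is a straightforward specialization, and the only step where one might slip is remembering that the target of $d$ really is $\g^{\ast}$, which forces $d = 0$ because the source $\R$ consists of constant functions.
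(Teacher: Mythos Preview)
Your proposal is correct and follows exactly the approach the paper intends: the corollary is stated without a separate proof and is presented as the specialization of Corollary~\ref{hemistrictG} to the left action of $G$ on $(G,\nu_k)$, using Theorem~\ref{left_invariant_1-forms} to identify $\hamL=\g^\ast$ and the observation that $\cinfL=\R$ to get $d=0$. There is nothing to add or change.
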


\begin{corollary}
\label{semistrictG2}
If $G$ is a compact simple Lie group with Lie algebra $\g$ and 2-plectic structure $\nu_{k}$,
then there is a semistrict Lie 2-algebra $L(G,k)_\s$
where:
\begin{itemize}
\item the space of 0-chains is $\g^{\ast}$,
\item the space of 1-chains is $\R$,
\item the differential is the exterior derivative $d \maps \R \to
  \g^{\ast}$ (i.e.\ $d=0$), 
\item the bracket is $\Sblank$,
\item the alternator is the identity, hence given by the bilinear
map $S \maps \g^{\ast} \times \g^{\ast} \to \R$ with $S_{\alpha,\beta} = 0$, and
\item the Jacobiator is the trilinear map $J\maps \g^{\ast}\times \g^{\ast}\times 
\g^{\ast}\to \R$ defined by $J_{\alpha,\beta,\gamma} = -\ip{\alpha}\ip{\beta}\ip{\gamma}\nu_{k}$.
\end{itemize}
\end{corollary}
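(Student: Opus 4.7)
The plan is to obtain $L(G,k)_\s$ as a direct specialization of Corollary \ref{semistrictG}. Applied to the left action $\mu\maps G\times G\to G$ given by $\mu(g,h)=L_g h$ on the 2-plectic manifold $(G,\nu_k)$, that corollary already produces a semistrict Lie 2-algebra; the work is simply to identify the resulting chain spaces and structure maps in Lie-theoretic language.

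First I would verify the hypothesis of Corollary \ref{semistrictG}, namely that the left $G$-action preserves $\nu_k$. This is built into the construction in Proposition \ref{G_is_2-plectic}: $\nu_k$ was defined as the left-translate of $\theta_k\in\Lambda^3\g^*$, so $L_g^*\nu_k=\nu_k$ for all $g\in G$. The corollary then produces a semistrict Lie 2-algebra $L(G,\nu_k)^G_\s$, which I would take as the definition of $L(G,k)_\s$.

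Next I would identify the spaces of chains explicitly. The space of 0-chains is $\hamL$, which equals $\g^*$ by Theorem \ref{left_invariant_1-forms}. The space of 1-chains is $\cinfL$, and since any left-invariant smooth function on $G$ is necessarily constant (as observed in the paragraph preceding Theorem \ref{left_invariant_1-forms}), this is canonically identified with $\R$. The differential is the restriction of the exterior derivative to constants, so it vanishes.

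Finally, I would record that the bracket, alternator, and Jacobiator are the ones inherited from Corollary \ref{semistrictG}. The bracket is $\Sblank$, the alternator vanishes identically, and the Jacobiator is $J_{\alpha,\beta,\gamma}=-\ip{\alpha}\ip{\beta}\ip{\gamma}\nu_k$. One small check is that this last quantity really lands in $\R$ rather than in a larger space of functions: Proposition \ref{invariant_vector_fields} guarantees that $v_\alpha, v_\beta, v_\gamma$ are left invariant for $\alpha,\beta,\gamma\in\g^*$, and $\nu_k$ itself is left invariant, so the iterated contraction is a left-invariant function on $G$, hence constant. There is no serious obstacle in this argument; the entire content is packaged into Theorem \ref{left_invariant_1-forms}, the triviality $\cinfL\cong\R$, and the direct invocation of Corollary \ref{semistrictG}.
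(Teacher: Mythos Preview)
Your proposal is correct and follows exactly the route the paper takes: the paper states this corollary without an explicit proof, indicating just before it that Theorem \ref{left_invariant_1-forms} together with the identification $\cinfL\cong\R$ allows one to invoke Corollary \ref{semistrictG} for the left action of $G$ on $(G,\nu_k)$. Your write-up is in fact more detailed than the paper's own justification, including the check that $\nu_k$ is left invariant and that the Jacobiator lands in $\R$, but the underlying argument is the same.
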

Note that we obtain
a hemistrict Lie 2-algebra $L(G,k)_\h$ and
a semistrict Lie 2-algebra $L(G,k)_\s$ for every nonzero
real number $k$. It is also important to recall that Corollary \ref{isomorphismG}
implies the hemistrict Lie 2-algebra $L(G,k)_\h$ is
isomorphic to the semistrict Lie 2-algebra $L(G,k)_\s$.

Also we see from the proof of Theorem \ref{left_invariant_1-forms}
that there is a nice correspondence between left invariant Hamiltonian
1-forms and left invariant Hamiltonian vector fields which relies on
the isomorphism between $\g$ and its dual space via the inner product
$\innerprod{\cdot}{\cdot}$. As a result, we have the following proposition
which will be useful in the next section:
\begin{prop}
\label{Ham_vect}
If $G$ is a compact simple Lie group with 2-plectic structure
$\nu_{k}$ and $\innerprod{\cdot}{\cdot}$ is the
inner product on the Lie algebra $\g$ of $G$ used in the construction of
$\nu_{k}$, then $\Vect\left(G\right)^L=\g$ and there is an isomorphism of vector spaces
\[\varphi \maps \Vect\left(G\right)^L \stackrel{\sim}{\longrightarrow} \hamL \]
such that $\varphi(v)=k\innerprod{v}{\cdot}$ is the unique left
invariant Hamiltonian 1-form whose Hamiltonian vector field is $v$.
\end{prop}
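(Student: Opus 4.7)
The plan is to reduce everything to what was already done in the proof of Theorem \ref{left_invariant_1-forms} together with the fact that $\innerprod{\cdot}{\cdot}$ is a nondegenerate inner product. First I would define the candidate map. For $v \in \g$, viewed as a left-invariant vector field, set $\varphi(v) = k\innerprod{v}{\cdot}$, that is, the 1-form on $G$ whose value on a left-invariant vector field $x \in \g$ is $k\innerprod{v}{x}$ and which is extended to all of $G$ by left-translation. Since both $v$ and $\innerprod{\cdot}{\cdot}$ are left-invariant, $\varphi(v)$ lies in $\g^{\ast}$, and the map $\varphi \maps \g \to \g^{\ast}$ is clearly linear.

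Next I would verify the Hamiltonian condition and identify the Hamiltonian vector field. By Theorem \ref{left_invariant_1-forms}, every element of $\g^{\ast}$ is Hamiltonian, so $\varphi(v) \in \hamL$. The computation in the proof of Theorem \ref{left_invariant_1-forms} shows that if $\alpha \in \g^{\ast}$ is written as $\alpha(x) = k\innerprod{w}{x}$ for some $w \in \g$, then $d\alpha = -\iota_{w} \nu_{k}$; uniqueness of Hamiltonian vector fields (noted after Definition \ref{semi-bracket.defn}, more or less, and following from nondegeneracy of $\nu_{k}$) then forces $v_{\varphi(v)} = v$. In particular the Hamiltonian vector field of $\varphi(v)$ is itself left-invariant, so every element of $\g$ is realized as a left-invariant Hamiltonian vector field.

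I would then establish $\Vect\left(G\right)^L = \g$ and bijectivity of $\varphi$. Conversely, suppose $\alpha \in \hamL$ with Hamiltonian vector field $v_{\alpha}$. By Proposition \ref{invariant_vector_fields} applied to the left action of $G$ on itself, $v_{\alpha}$ is left-invariant, so $v_{\alpha} \in \g$; this gives the inclusion $\Vect\left(G\right)^L \subseteq \g$, and the previous paragraph gives the reverse inclusion. Finally, $\varphi$ is injective because $\innerprod{\cdot}{\cdot}$ is nondegenerate: $k\innerprod{v}{x} = 0$ for all $x \in \g$ forces $v = 0$. It is surjective onto $\hamL = \g^{\ast}$ by Theorem \ref{left_invariant_1-forms} together with the fact that the musical map $v \mapsto k\innerprod{v}{\cdot}$ is already a linear isomorphism $\g \xrightarrow{\sim} \g^{\ast}$. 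Uniqueness of the 1-form $\varphi(v)$ with Hamiltonian vector field $v$ among \emph{left-invariant} Hamiltonian 1-forms is then immediate, since any two such differ by a closed left-invariant 1-form in $\g^{\ast}$ that is also Hamiltonian with vector field $0$, and the computation above shows such a form must be $k\innerprod{0}{\cdot} = 0$.

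There is no real obstacle here; the only mildly delicate point is to be careful that ``the Hamiltonian vector field of $\varphi(v)$ is $v$'' uses the identification of left-invariant 1-forms with elements of $\g^{\ast}$ via evaluation at the identity, which is exactly the identification used throughout the proof of Theorem \ref{left_invariant_1-forms}.
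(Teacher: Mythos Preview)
Your argument is correct and, for existence and the identification $\Vect(G)^L=\g$, proceeds exactly as the paper does, namely by leaning on the computation in Theorem~\ref{left_invariant_1-forms}. The one point where you diverge is uniqueness. The paper argues that if $\alpha,\beta\in\g^{\ast}$ share the same Hamiltonian vector field then $\gamma=\alpha-\beta$ is a closed left-invariant $1$-form, and then invokes $H^{1}(\g,\R)\cong\g/[\g,\g]=0$ (simplicity of $\g$) to conclude $\gamma$ is exact, hence zero since left-invariant functions are constant. You instead use that the musical map $\varphi\maps\g\to\g^{\ast}$ is a linear isomorphism and that $v_{\varphi(w)}=w$, so the assignment $\alpha\mapsto v_{\alpha}$ on $\g^{\ast}$ is precisely $\varphi^{-1}$ and in particular injective. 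Your route is slightly more elementary in that it avoids the cohomological fact $[\g,\g]=\g$ and uses only nondegeneracy of $\innerprod{\cdot}{\cdot}$; the paper's route, on the other hand, makes explicit why simplicity (and not just the existence of an invariant inner product) is the hypothesis doing the work. The only phrasing to tighten is ``the computation above shows such a form must be $k\innerprod{0}{\cdot}=0$'': make explicit that you are writing $\gamma=\varphi(w)$ via surjectivity of $\varphi$ and then reading off $w=v_{\gamma}=0$.
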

\begin{proof}
We show only uniqueness since the rest of the proposition follows
immediately from the arguments made in the proof of Theorem
\ref{left_invariant_1-forms}. Let $\alpha \in \g^{\ast}$ and let
$v_{\alpha}$ be its corresponding Hamiltonian vector field.  Suppose
$v_{\alpha}$ also corresponds to any other left invariant Hamiltonian
1-form $\beta \in \g^{\ast}$. Then $d\alpha = -\ip{\alpha}
\nu_{k}=d\beta$ implies $d \left(\alpha -\beta \right)=0$. Hence
$\gamma=\alpha-\beta$ is a closed left invariant 1-form.  Since $\g$
is simple, its first cohomology $H^1\left(\g,\R\right) \cong
\g/\left[\g,\g\right]$ is trivial. The first de Rham cohomology group
of left invariant forms on $G$ is isomorphic to
$H^1\left(\g,\R\right)$. Therefore $\gamma=df$, where $f \in \cinfL$
is a left invariant smooth function. Since $\cinfL$ is the set of
constant functions, it follows that $\gamma=0$ and hence $\alpha$ is
the unique Hamiltonian 1-form corresponding to the Hamiltonian vector
field $v_{\alpha}$.
\end{proof}  
    
\section{The String Lie-2 Algebra}
\label{string_Lie} 

We have described how to construct hemistrict and semistrict Lie
2-algebras from any compact simple Lie group $G$ and any nonzero real
number $k$ using the 2-plectic structure $\nu_{k}$.  Now we show that
these are isomorphic to the `string Lie 2-algebra' of $G$.

It was shown in previous work \cite{HDA6} that semistrict Lie
2-algebras can be classified up to equivalence by data consisting
of:
\begin{itemize}
\item{a Lie algebra $\g$,}
\item{a vector space $V$,}
\item{a representation $\rho \maps \g \to \End\left(V\right)$,}
\item{an element $[j]\in H^{3} \left(\g,V \right)$ of the Lie algebra
  cohomology of $\g$.}   
\end{itemize}
A semistrict Lie 2-algebra $L$ is constructed from this data by
setting the space of 0-chains $L_{0}$ equal to $\g$, the space
1-chains $L_{1}$ equal to $V$, and the differential to be the zero
map: $d=0$. The bracket $\left[\cdot,\cdot \right]
\maps L \otimes L \to L$ is defined to be the Lie bracket on $\g$ in degree
0, and defined in degrees 1 and 2 by:
\[ [x,a]=\rho_{x}(a), \qquad [a,x]=-\rho_{x}(a), \qquad [a,b]=0,\]
for all $x  \in L_{0}$ and $a,b \in L_{1}$. 
The Jacobiator is taken to be any 3-cocycle $j$ representing
the cohomology class $[j]$.

From this classification we can construct the \textbf{string Lie
2-algebra} $\g_{k}^\s$ of a compact simple Lie group $G$ by taking
$\g$ to be the Lie algebra of $G$, $V$ to be $\R$, $\rho$ to
be the trivial representation, and 
\[  j(x,y,z)=k \innerprod{x}{\left[y,z\right]} \]
where $k \in \R$.  When $k \neq 0$,
the 3-cocycle $j$ represents a nontrivial cohomology class.  Note that
since $\rho$ is trivial, the bracket of $\g_{k}^\s$ is trivial in all
degrees except 0.

It is natural to expect that the string Lie 2-algebra is closely
related to the Lie 2-algebra $L(G,k)_\s$ described in Corollary
\ref{semistrictG2}, since both are semistrict Lie 2-algebras built
using solely the trilinear form $\theta_{k}$ on $\g$.  Indeed, this
turns out to be the case:

\begin{theorem}
\label{string_Lie_Thm}
If $G$ is a compact simple Lie group with Lie algebra $\g$ 
and 2-plectic structure $\nu_{k}$, then the string Lie 2-algebra
$\g_{k}^\s$ is isomorphic both to the semistrict Lie 2-algebra $L(G,k)_\s$
and to the hemistrict Lie 2-algebra $L(G,k)_\h$.
\end{theorem}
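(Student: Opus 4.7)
The plan is to prove the theorem by constructing a direct isomorphism $\psi\maps \g_k^\s \to L(G,k)_\s$; the claim for the hemistrict case then follows immediately by composing with the isomorphism $L(G,k)_\s \cong L(G,k)_\h$ supplied by Corollary \ref{isomorphismG}. The natural candidate for $\psi$ is built from Proposition \ref{Ham_vect}: in degree $0$, take $\psi_0 = \varphi\maps \g \to \g^\ast = \hamL$, $v \mapsto k\innerprod{v}{\cdot}$, and in degree $1$ take $\psi_1 = \Id\maps \R \to \R$. Since both differentials vanish, $\psi$ is automatically a chain map, and it is a chain-level isomorphism because $\varphi$ is.

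Next I would try the simplest possible chain homotopy, namely $\Psi = 0$, and verify this works. For $\Psi$ to witness that $\psi$ preserves the bracket, one needs $\psi_0([x,y]_\g) = \Sbrac{\psi_0(x)}{\psi_0(y)}$ strictly in degree $0$, and the analogous identities in degrees $1$ and $2$. Identifying $\g$ with left-invariant vector fields, the Hamiltonian vector field of $\psi_0(x)$ is $x$ itself (Proposition \ref{Ham_vect}), so by the definition of the semi-bracket and of $\nu_k$,
\[
\Sbrac{\psi_0(x)}{\psi_0(y)}(z) \;=\; \ip{y}\ip{x}\nu_k(z) \;=\; \theta_k(x,y,z) \;=\; k\innerprod{x}{[y,z]},
\]
while $\psi_0([x,y]_\g)(z) = k\innerprod{[x,y]}{z}$, and these agree by the $\Ad$-invariance of $\innerprod{\cdot}{\cdot}$. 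In degrees $1$ and $2$ both sides are identically zero, since the string Lie 2-algebra has trivial representation and the semi-bracket vanishes off degree $0$.

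It then remains to check the two coherence equations from Definition \ref{homo} with $\Psi = 0$. The alternator equation is trivial because both alternators vanish. The Jacobiator equation reduces to the single identity
\[
J'\bigl(\psi_0(x),\psi_0(y),\psi_0(z)\bigr) \;=\; \psi_1\bigl(j(x,y,z)\bigr),
\]
whose right-hand side is $k\innerprod{x}{[y,z]}$ and whose left-hand side is $-\ip{x}\ip{y}\ip{z}\nu_k = -\theta_k(z,y,x)$; the total antisymmetry of $\theta_k$ then yields the desired equality. This is the only step that mixes the two definitions of the Jacobiator, so it is the real computational heart of the proof; the main subtlety is simply keeping track of the sign produced by the ordering convention $\ip{\alpha}\ip{\beta}\ip{\gamma}\omega$ versus the totally antisymmetric form $\theta_k(x,y,z)$. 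Once these identifications are made, $\psi$ is a Lie 2-algebra isomorphism, and composition with the isomorphism of Corollary \ref{isomorphismG} delivers the isomorphism $\g_k^\s \cong L(G,k)_\h$.
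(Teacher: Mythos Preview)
Your proof is correct and follows the same strategy as the paper's: the same chain map $\varphi \oplus \Id_\R$, the same trivial chain homotopy $\Psi = 0$, and the same appeal to Corollary~\ref{isomorphismG} for the hemistrict case. The only difference is in the verification of $\Sbrac{\psi_0(x)}{\psi_0(y)} = \psi_0([x,y])$: you compute both sides directly via $\theta_k$ and $\Ad$-invariance of $\innerprod{\cdot}{\cdot}$, whereas the paper observes (via Proposition~\ref{semi-bracket}) that both $1$-forms have Hamiltonian vector field $[x,y]$ and then invokes the uniqueness clause of Proposition~\ref{Ham_vect}; you are also more explicit about the Jacobiator coherence equation, which the paper simply declares to hold ``trivially.''
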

\begin{proof}
Recall that in degree 0, the bracket of $\g_{k}^\s$ is the Lie bracket $\left
[\cdot,\cdot \right]$ of $\g$ and the bracket of
$L(G,k)_\s$ is the bracket $\Sblank$ introduced in Definition
\ref{semi-bracket.defn}. In all other degrees, both
brackets are the zero map. To simplify notation, 
$\g_{k}^\s$ and $L(G,k)_\s$ will denote both the Lie 2-algebras 
and their underlying chain complexes.
We will show that there exists a Lie 2-algebra
isomorphism between $\g_{k}^\s$ and $L(G,k)_\s$ by constructing
a chain map
\[\phi \maps \g_{k}^\s \to L(G,k)_\s \]
and a chain homotopy
\[\Phi \maps \g_{k}^\s \tensor \g_{k}^\s \to L(G,k)_\s,\]
satisfying the conditions listed in Definition
\ref{homo}. Indeed, we will show that the maps 
$\Sblank \circ \left( \phi \otimes \phi \right)$ and
$\phi \circ \left [\cdot,\cdot \right]$ are actually equal.    

By Proposition \ref{Ham_vect}, there exists a vector space isomorphism
\[\varphi \maps \Vect\left(G\right)^L \stackrel{\sim}{\longrightarrow} \hamL\]
which takes $x \in \Vect\left(G\right)^L$ to the left invariant
Hamiltonian 1-form $\varphi(x)$ whose Hamiltonian vector field is $x$.
The degree 0 components of $\g_{k}^\s$ and $L(G,k)_\s$ are 
$\g$ and $\g^{\ast}$, respectively. From Proposition \ref{Ham_vect} and  
Theorem \ref{left_invariant_1-forms}, we have
\[ \g = \Vect\left(G\right)^{L},  \qquad  \g^{\ast}=\hamL.\]
Using these equalities and the
above isomorphism, we can define $\phi_{0}\maps\g \to \g^{\ast}$ to be
the chain map $\phi$ in degree 0 with
\[\phi_{0}(x)=\varphi(x). \]  
The degree 1 component of both $\g_{k}^\s$ and $L(G,k)_\s$ is
$\R$, and so we define $\phi$ in degree 1 to be the identity map on $\R$.

If $x,y\in \g$, then it follows from Proposition \ref{Ham_vect} that
$\phi_{0}(x)$, $\phi_{0}(y)$, and $\phi_{0} \left([x,y]\right)$ are
the unique left invariant Hamiltonian 1-forms whose
Hamiltonian vector fields are $x$, $y$, and $[x,y]$,
respectively. But Proposition \ref{semi-bracket} implies 
\[d\Sbrac{\phi_0(x)}{\phi_0(y)}= -\iota_{\left[x,y \right]} \nu_{k}.\]
Hence $[x,y]$ is also the Hamiltonian vector field of
$\Sbrac{\phi_{0}(x)}{\phi_{0}(y)}$. It then follows from uniqueness that 
$\Sbrac{\phi_{0}(x)}{\phi_{0}(y)}=\phi_{0} \left([x,y] \right)$.
Therefore the chain maps
\[\Sblank \circ \left( \phi \otimes \phi \right) \maps \g_{k}^\s \otimes
\g_{k}^\s \to L(G,k)_\s \]
and
\[\phi \circ \left [\cdot,\cdot \right] \maps \g_{k}^\s \otimes
\g_{k}^\s \to L(G,k)_\s \]
are equal and hence the chain homotopy $\Phi$ can be taken to be the
identity.  It then follows that the equations in Definition \ref{homo}
hold trivially, and the chain map $\phi$ is invertible by
construction.  Finally, by applying Corollary \ref{isomorphismG}, we
see that $\g_{k}^\s$ is also isomorphic to the hemistrict Lie
2-algebra $L(G,k)_\h$.
\end{proof}

Roytenberg \cite{Roytenberg} has also shown that given a simple Lie
algebra $\g$ and $k \in \R$, one can construct a hemistrict Lie
2-algebra $\g_k^\h$ where:
\begin{itemize}
\item{the space of 0-chains is $\g$,}
\item{the space of 1-chains is $\R$,}
\item{the differential $d$ is the zero map,}
\item{the bracket is the Lie bracket of $\g$ in degree 0 and trivial
  in all other degrees,}
\item{the alternator is the bilinear map 
$S(x,y)= -2k\innerprod{x}{y}$, and}
\item{the Jacobiator is the identity.}
\end{itemize}

He also showed that this hemistrict Lie 2-algebra $\g_{k}^\h$ was
isomorphic to the already known semistrict version of the string Lie
2-algebra, which we are calling $\g_{k}^\s$.  Combining his result
with Corollary \ref{isomorphismG} and Theorem \ref{string_Lie_Thm}, it
becomes clear that we are dealing with the same Lie 2-algebra in four
slightly different guises:
\[\g^{\h}_{k} \cong \g^{\s}_{k} \cong L(G,k)_\h \cong L(G,k)_\s. \]
In particular, we may view the Lie 2-algebras $L(G,k)_\h$ and
$L(G,k)_\s$ as geometric constructions of $\g^{\h}_{k}$ and
$\g^{\s}_{k}$, respectively.

\subsubsection*{Acknowledgements}

We thank Danny Stevenson for suggesting that our construction of Lie
2-algebras from 2-plectic manifolds might yield the string Lie
2-algebra when applied to a compact simple Lie group.  We also thank
Dmitri Roytenberg for helpful conversations.  This work was partially
supported by a grant from The Foundational Questions Institute.

\end{document}